\documentclass[twocolumn,10pt]{article}
\usepackage[letterpaper,textwidth=6.75in,textheight=9.25in,columnsep=0.25in,centering]{geometry}
\usepackage[T1]{fontenc}
\usepackage[utf8]{inputenc}
\usepackage{amsmath}
\usepackage{amssymb}
\usepackage{amsthm}
\usepackage[nopatch]{microtype}
\usepackage{booktabs}
\usepackage{graphicx}
\usepackage{xcolor}
\usepackage{xurl}
\usepackage[round,authoryear]{natbib}
\usepackage[hidelinks,pdftitle={Cluster-Robust Prediction-Powered Inference},pdfauthor={David Broska, Michael Howes}]{hyperref}
\newcommand{\PPI}{\mathrm{PP}}
\newcommand{\Expect}{\mathbb{E}}
\newcommand{\prob}{\mathbb{P}}
\newcommand{\reals}{\mathbb{R}}
\newcommand{\Tr}{\mathop{\bf Tr}}
\newcommand{\argmin}{\mathop{\rm argmin}}

\DeclareMathOperator*{\var}{Var}
\DeclareMathOperator*{\cov}{Cov}

\theoremstyle{definition}
\newtheorem{theorem}{Theorem}
\newtheorem{proposition}{Proposition}
\newtheorem{lemma}{Lemma}

\theoremstyle{definition}

\newtheorem{example}{Example}
\newtheorem{assumption}{Assumption}
\theoremstyle{remark}
\newtheorem{remark}{Remark}

\begin{document}

\twocolumn[{
\begin{center}
  {\LARGE\bfseries Cluster-Robust Prediction-Powered Inference\par}
  \vspace{1.2em}
  \begin{tabular}{c@{\hspace{4em}}c}
    \large David Broska & \large Michael Howes \\[3pt]
    Department of Sociology & Department of Statistics \\
    Stanford University & Stanford University \\
    \texttt{dbroska@stanford.edu} & \texttt{mhowes@stanford.edu}
  \end{tabular}\par
  \vspace{1.2em}
  \begin{minipage}{0.92\textwidth}
    \small\textbf{Abstract.} Data collection is often costly or logistically demanding, limiting both the questions researchers can pursue and how precisely they can answer them. Prediction-powered inference (PPI) can reduce the amount of data needed for precise parameter estimation by combining labeled data with machine learning predictions. However, ignoring dependence within clusters can produce confidence intervals that cover the true parameter less often than their nominal rate. We introduce \textit{Cluster-Robust PPI++}, which provides standard errors in closed form and asymptotically valid confidence intervals under arbitrary dependence within independent clusters, requiring no bootstrap or resampling. Our central contribution is to accommodate \textit{partially labeled clusters}, a common empirical setting in which clusters contain both labeled and unlabeled units. As units are dependent within clusters, partially labeled clusters violate the independence assumption of PPI++. We also show how precision increases depend on the labeling design, and derive a \textit{cluster-aware power tuning} rule that minimizes asymptotic variance. In an application to television news, standard PPI++ confidence intervals have coverage below 60\%, whereas Cluster-Robust PPI++ can achieve nominal 95\% coverage.\par
    \vspace{0.5em}
    \textbf{Keywords:} prediction-powered inference, cluster-robust standard errors, partially labeled clusters, power tuning, machine learning, computational social science
  \end{minipage}
\end{center}
\vspace{1.5em}
}]

\section{Introduction}

Collecting high-quality data often requires substantial resources, whether to recruit study participants, obtain expert annotations, or coordinate data collection with specialized instruments across settings and over time. As a result, some research questions are too costly to investigate at scale, while studies that are feasible may rely on too little data to estimate quantities of interest precisely \citep{marek_reproducible_2022, arel-bundock_quantitative_2026}. A growing literature examines how predictions from artificial intelligence models can help researchers make more efficient use of limited data \citep{egami_using_2023, hullman_this_2026, ludwig_large_2026, van_loon_using_2026}.

Prediction-powered inference (PPI) is a statistical method that reduces the data needed for precise parameter estimation by combining labeled data with machine learning predictions \citep{angelopoulos_prediction-powered_2023-1}. This method yields valid confidence intervals even when the predictions are arbitrarily biased. PPI++ extends this approach through power tuning, which adjusts the weight assigned to predictions to minimize asymptotic variance \citep{angelopoulos_ppi_2024}.

PPI++ assumes independence across sampled units, but many applications involve data that are correlated within clusters. Examples include images from the same event \citep{boussalis_gender_2021, rister_portinari_maranca_correcting_2025}, ratings from the same person \citep{broska_mixed_2025, mehrotra_multi-perspective_2026}, and satellite measurements from the same region \citep{kluger_prediction-powered_2025, salerno_spatially_2026}. Moreover, PPI++ assumes that the labeled and unlabeled datasets are independent, which may not hold when both datasets contain units from the same cluster.

Just as ignoring prediction error invalidates classical confidence intervals~-- the very problem PPI++ is designed to solve~-- ignoring within-cluster dependence undermines the coverage guarantees PPI++ is meant to provide. Applied to clustered data, standard PPI++ standard errors are invalid in that its confidence intervals cover the truth far less often than their nominal rate.

We introduce Cluster-Robust PPI++, which extends PPI++ to data with arbitrary within-cluster dependence. We derive closed-form, cluster-robust sandwich variance estimators that yield valid standard errors and confidence intervals with no bootstrap or resampling. We also extend PPI++ power tuning to the clustered setting, so the tuning parameter adapts to both the accuracy of the predictions and the degree of within-cluster dependence. The tuned estimator has asymptotic variance no greater than that of the corresponding estimator based solely on labeled data.

Our central contribution is inference with \emph{partially labeled clusters}, in which labeled and unlabeled units coexist within the same cluster. This gives researchers the flexibility to sample labels at the unit level rather than sampling the labels for entire clusters. Existing cluster-aware approaches to PPI \citep{kluger_prediction-powered_2025} require each cluster to be either fully labeled or fully unlabeled and rely on the bootstrap; our closed-form estimator removes both restrictions. Figure~\ref{fig:cluster-data} illustrates this data structure, showing labeled and unlabeled units within clusters of different sizes.

\begin{figure*}[t]
    \centering
    \includegraphics[width=\textwidth]{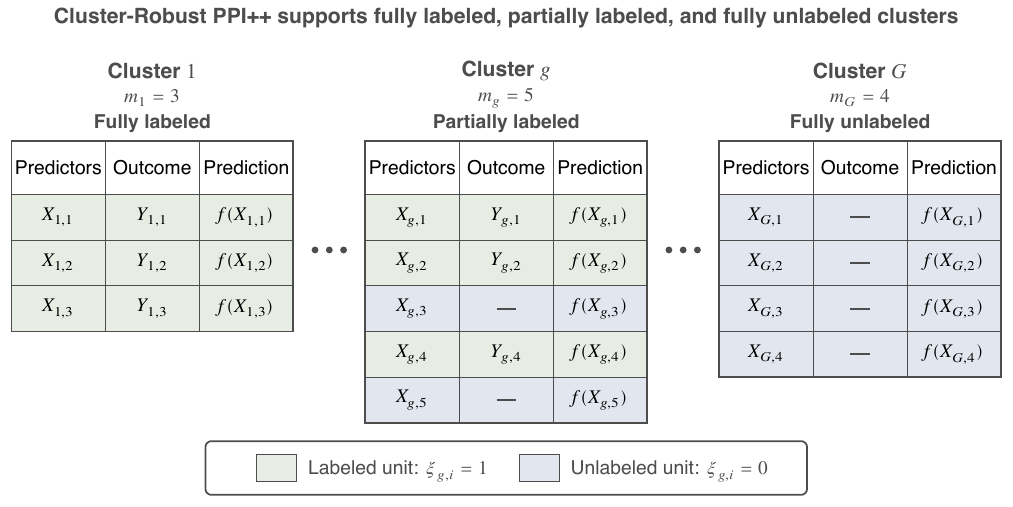}
    \caption{Clustered data for prediction-powered inference. Each row represents
    unit \(i\) in cluster \(g\). Predictors \(X_{g,i}\) and predictions
    \(f(X_{g,i})\) are available for every unit. Outcomes \(Y_{g,i}\) are
    observed for all, some, or none of the units, making the cluster fully
    labeled, partially labeled, or fully unlabeled, respectively.}
    \label{fig:cluster-data}
\end{figure*}

Moreover, the clustered setting introduces design choices that are absent from standard PPI++. With i.i.d.\ data, researchers mainly choose how many units to label relative to the number of unlabeled, predicted units. With clustered data, precision also depends on how labels are allocated across the cluster structure: many labels in a few clusters versus a few labels in many clusters, and the extent of within-cluster dependence in outcomes and predictions. We derive formulas that show how these choices affect cluster-robust standard errors. These formulas can help researchers compare labeling designs and identify the most precise option.

We evaluate Cluster-Robust PPI++ in an application to television news, in which a face recognition system identifies politicians on screen, using both fully labeled and partially labeled designs. In this application, the method substantially improves confidence-interval coverage relative to standard PPI++ while retaining gains in precision over estimates based solely on labeled data.

\section{Methods}\label{sec:methods}

We present cluster-robust standard errors for the PPI++ estimator of the mean of \(Y\), since it is the most illustrative case. Appendix~\ref{app:methods} derives cluster-robust standard errors for general PPI++ M-estimators. The PPI++ estimator \citep{angelopoulos_ppi_2024} is defined in Section~\ref{sec:ppi++}. Section~\ref{sec:DGP} describes the assumptions underlying the derivations of cluster-robust standard errors. The key assumptions are that clusters are drawn i.i.d. and that within each cluster, every label has the same probability of being sampled. The cluster-robust standard errors are presented in Section~\ref{sec:cluster-se}. Section~\ref{sec:power-tuning} and \ref{sec:gain} contains a discussion on power tuning and the PPI correlation for clustered data \citep{broska_mixed_2025}. 
The design decisions for Cluster-Robust PPI++ are discussed in Section~\ref{sec:design}. Finally, Section~\ref{sec:general} briefly discusses how the results for mean estimation generalize to other estimators.

\subsection{The PPI++ estimator}\label{sec:ppi++}

The PPI++ estimator was introduced in \citet{angelopoulos_ppi_2024}. For estimating the mean, the estimator PPI++ is equal to the generalized regression (GREG) estimator used in survey sampling \citep{sarndal2003model, mozer2026ppi}. The estimator is equal to
\begin{equation}\label{eq:PPI++mean}
    \hat\theta^{\PPI}_\lambda = \frac{\lambda}{N}\sum_{i=1}^N f(\widetilde{X}_i) + \frac{1}{n}\sum_{i=1}^n \left(Y_i-\lambda f(X_i)\right),
\end{equation}
where \((X_i,Y_i)_{i=1}^n\) is a labeled dataset and \((\widetilde{X}_i)_{i=1}^N\) is an unlabeled dataset. Both datasets are assumed to be independent samples with the same predictor distribution: \(X_i \stackrel{d}{=} \widetilde{X}_i\). Within each dataset, units are sampled i.i.d. The function \(f\) is an AI model that predicts \(Y\) from \(X\) and is fixed independently of both the labeled and unlabeled datasets. Finally, \(\lambda\) is a power tuning parameter which is chosen to minimize the asymptotic variance of \(\hat\theta_\lambda^{\PPI}\). Note that when \(\lambda=0\), the PPI++ estimator is the sample mean of \((Y_i)_{i=1}^n\).

Under the above assumptions, the estimator \(\hat \theta^{\PPI}_\lambda\) is unbiased for \(\theta^\star = \Expect[Y]\). Furthermore, if \(X_i\) and \(Y_i\) both have finite variance, then the central limit theorem implies if \(n,N \to \infty\) and \(n/N \to r\), then
\begin{equation}\label{eq:iid-CLT}
    \sqrt{n}(\hat\theta^\PPI_\lambda - \theta^\star) \stackrel{d}{\rightarrow} \mathcal{N}(0,\sigma^2_\lambda),
\end{equation}
where \(\mathcal{N}(\mu,\sigma^2_\lambda)\) denotes the univariate normal distribution and \(\stackrel{d}{\to}\) denotes convergence in distribution. \citet{angelopoulos_ppi_2024} derive a consistent estimator for \(\sigma^2_\lambda\) as well as the value \(\lambda^\star\) that minimizes \(\sigma^2_\lambda\). They provide a plug-in estimator \(\hat\lambda\) for \(\lambda^\star\) and show that ``power-tuned estimator'' \(\hat\theta^\PPI_{\hat\lambda}\) achieves the same asymptotic variance as \(\hat\theta^\PPI_{\lambda^\star}\). The power-tuned estimator is no longer unbiased but has bias  \(O(n^{-1})\). Since the bias is lower order, the following central limit theorem still holds:
\begin{equation*}
    \sqrt{n}(\hat\theta^\PPI_{\hat\lambda} - \theta^\star) \stackrel{d}{\rightarrow} \mathcal{N}(0,\sigma^2_{\lambda^\star}).
\end{equation*}

This central limit theorem can be used to construct asymptotically valid confidence intervals for \(\theta^\star\). Furthermore, since \(\sigma^2_{\lambda^\star} \le \sigma^2_0\), the power tuned PPI++ estimator \(\hat\theta^\PPI_{\hat\lambda}\) is always at least as asymptotically efficient as the labels-only estimator \(\hat\theta^\PPI_0\).

The derivations in \citet{angelopoulos_ppi_2024} heavily rely on the assumptions that \(((X_i,Y_i))_{i=1}^n\) and \((\widetilde{X}_i)_{i=1}^N\) are i.i.d. and that these two datasets are independent. Neither of these assumptions holds in the setting of clustered data which we will describe next.

\subsection{Data generating process for clustered data}\label{sec:DGP}

This section introduces the notation for Cluster-Robust PPI++ and describes its two key assumptions about the cluster structure. 
The \(g\)th cluster will be denoted by \((X_{g,i},Y_{g,i})_{i=1}^{m_g}\). The first assumption is that each cluster is an i.i.d. draw from a cluster distribution. This means that units \((X_{g,i},Y_{g,i})\) are independent across clusters but may be arbitrarily dependent within each cluster. The size of the cluster, \(m_g\), is a random variable and is allowed to be correlated with \((X_{g,i},Y_{g,i})\). The  number of clusters is denoted by \(G\).

The second assumption is that there exists a cluster level probability \(p_g\) such that each outcome \(Y_{g,i}\) is observed with probability \(p_g\). More precisely, let \(\xi_{g,i}\) indicate whether the outcome \(Y_{g,i}\) is labeled. That is, if \(\xi_{g,i}=1\), then both \(X_{g,i}\) and \(Y_{g,i}\) are observed. Otherwise, only \(X_{g,i}\) is observed. It is assumed that conditional on \(p_g\), the label indicators \((\xi_{g,i})_{i=1}^{m_g}\) are i.i.d. Bernoulli random variables with \(\prob(\xi_{g,i}=1|p_g)=p_g\). The probabilities \((p_g)_{g=1}^G\) are assumed to be i.i.d. and it is assumed that \(p_g\) and \((\xi_{g,i})_{i=1}^{m_g}\) are independent of the cluster data \((X_{g,i},Y_{g,i})_{i=1}^{m_g}\). See Figure~\ref{fig:cluster-data} for an illustration of the labeling process in the clustered setting.

To see the connection with standard PPI++, the labeled dataset \(((X_{i},Y_i))_{i=1}^n\) is replaced with \((X_{g,i},Y_{g,i})_{\xi_{g,i}=1}\) and the unlabeled dataset \((\widetilde{X})_{i=1}^N\) is replaced with \((X_{g,i})_{\xi_{g,i}=0}\). The number of labeled units is \(n=\sum_{g=1}^G \sum_{i=1}^{m_g}\xi_{g,i}\) and the number of unlabeled units is \(N=\sum_{g=1}^G\sum_{i=1}^{m_g}(1-\xi_{g,i})\). Note that there may be units from the same cluster that are in both the labeled and unlabeled datasets. Thus, the labeled and unlabeled datasets are not independent.

Some further assumptions are placed on the cluster distribution. Specifically, it is assumed that the absolute sums \(\sum_{i=1}^{m_g}|Y_{g,i}|\) and \(\sum_{i=1}^{m_g}|f(X_{g,i})|\) have finite variance. It also assumed that the cluster size \(m_g\) has finite expectation and variance. Finally, it is required that the expected value of \(p_g\) is bounded away from \(0\) and \(1\).

\subsection{Cluster robust standard errors}\label{sec:cluster-se}

The main result in this section is Theorem~\ref{thrm:cluster-clt} which gives a central limit theorem for the PPI++ estimator under the data generating process described in the previous section. Theorem~\ref{thrm:cluster-clt} also includes a plug in estimator for the asymptotic variance of the PPI++ estimator which can be used to form asymptotically valid confidence intervals (Remark~\ref{rem:CI}).

We will assume that the estimand is the unit-level mean of \(Y\). That is
\begin{equation}
    \theta^\star = \frac{\Expect\left[\sum_{i=1}^{m_g}Y_{g,i}\right]}{\Expect[m_g]}.
\end{equation}
In general, this is a different estimand than cluster-level mean \(\Expect\left[\frac{1}{m_g}\sum_{i=1}^{m_g}Y_{g,i}\right]\) which is covered by the more general framework in Appendix~\ref{app:methods}.

Let \(\theta^f = \Expect[\sum_{i=1}^{m_g}f(X_{g,i})]/\Expect[m_g]\) be the unit-level mean of the predictions \(f(X)\). Also, let \(\kappa = \Expect[m_g]\) be the expected cluster size and \(r = \frac{\Expect[p_g]}{1-\Expect[p_g]}\) be the expected ratio of labeled to unlabeled data.

The following cluster-level random variables will be used to define the asymptotic variance of \(\theta^{\PPI}_\lambda\). For each cluster \(g=1,\ldots,G\), define
\begin{align*}
    \varepsilon_{g} & =\sum_{i=1}^{m_g}\xi_{g,i}(Y_{g,i}-\theta^\star), \\
    Z_{g}           & =\sum_{i=1}^{m_g}(\xi_{g,i}-r(1-\xi_{g,i}))(f(X_{g,i})-\theta^f).
\end{align*}
The first variable \(\varepsilon_g\) is a sum of   residuals and measures the cluster-level variability in the labels \(Y_{g,i}\). The second variable \(Z_g\) incorporates the predictions \(f(X_{g,i})\) without adding bias, and can be thought of as a cluster-level control variate.
The variables \((\varepsilon_g,Z_g)_{g=1}^G\) are both mean zero and
\begin{equation}\label{eq:ppi-approx}
    \hat{\theta}^{\PPI}_\lambda - \theta^\star \approx \frac{1}{n}\left(\sum_{g=1}^G\varepsilon_g - \lambda\sum_{g=1}^G Z_g\right)
\end{equation}
where the approximation comes from replacing \(\frac{n}{N}\) with the large sample limit \(r\). Equation \eqref{eq:ppi-approx} implies that the limiting distribution of \(\hat{\theta}^\PPI_\lambda-\theta^\star\) is determined by the limiting distribution of \((\varepsilon_g,Z_g)_{g=1}^G\).

Let \(\Sigma \in \reals^{2 \times 2}\) be the population covariance matrix for the pair \(\varepsilon_g,Z_g\), so that
\begin{equation}\label{eq:sigma-hat}
    \Sigma = \begin{bmatrix}
        \var(\varepsilon_g)      & \cov(\varepsilon_g, Z_g) \\
        \cov(\varepsilon_g, Z_g) & \var(Z_g)
    \end{bmatrix}
\end{equation}
To estimate \(\Sigma\), define the following in-sample version of \(\varepsilon_g\) and \(Z_g\)
\begin{align}
    \hat{\varepsilon}_g & =\sum_{i=1}^{m_g}\xi_{g,i}(Y_{g,i}-\hat\theta^\PPI_{\lambda}), \label{eq:hat-epsilon}\\
    \hat{Z}_g           & = \sum_{i=1}^{m_g}\left(\xi_{g,i}-\tfrac{n}{N}(1-\xi_{g,i})\right)(f(X_{g,i})-\hat\theta^f),\label{eq:hat-Z}
\end{align}
where \(\hat\theta^f = \frac{1}{n+N}\sum_{g=1}^G\sum_{i=1}^{m_g}f(X_{g,i})\). Let \(\widehat{\Sigma}\) be the following empirical estimate of \(\Sigma\),
\[
    \widehat{\Sigma} = \frac{1}{G}\sum_{g=1}^G \begin{bmatrix}
        \hat{\varepsilon}_g^2 & \hat{\varepsilon}_g\hat{Z}_g\\
        \hat{Z}_g\hat{\varepsilon}_g & \hat{Z}_g^2
    \end{bmatrix}
\]

Theorem~\ref{thrm:cluster-clt} gives the asymptotic distribution of \(\hat{\theta}^\PPI_\lambda\) as the number of clusters \(G\) goes to infinity.

\begin{theorem}[Central limit theorem for the PPI++ estimator]\label{thrm:cluster-clt}
    Suppose that the assumptions in Section~\ref{sec:DGP} hold. Then, as \(G \to \infty\),
    \[
        \sqrt{n}\left(\hat\theta^\PPI_\lambda - \theta^\star\right) \stackrel{d}{\rightarrow} \mathcal{N}(0,\sigma^2_\lambda),
    \]
    where
    \begin{align}\label{eq:asymp-variance}
        \sigma^2_\lambda & =\frac{1+r}{r\kappa}\left(\var(\varepsilon_g) - 2\lambda \cov(Z_g, \varepsilon_g) + \lambda^2 \var(Z_g)\right).
    \end{align}
    Furthermore, if
    \begin{equation}\label{eq:variance-est}
        \hat\sigma^2_\lambda = \frac{G}{n}\left(\widehat{\Sigma}_{1,1}-2\lambda\widehat{\Sigma}_{1,2}+\lambda^2\widehat{\Sigma}_{2,2}\right),
    \end{equation}
    then \(\hat\sigma^2_\lambda\) converges to \(\sigma^2_\lambda\) in probability and thus
    \begin{equation}\label{eq:cluster-clt}
        \frac{\sqrt{n}}{\hat\sigma_\lambda}\left(\hat\theta^\PPI_\lambda - \theta^\star\right) \stackrel{d}{\rightarrow} \mathcal{N}(0,1).
    \end{equation}
\end{theorem}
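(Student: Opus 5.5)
The plan is to reduce $\hat\theta^\PPI_\lambda-\theta^\star$ to a normalized sum of the i.i.d.\ cluster-level terms $\varepsilon_g-\lambda Z_g$ plus a remainder of order $o_p(n^{-1/2})$. The classical CLT and Slutsky's lemma then give the limit. Consistency of $\hat\sigma^2_\lambda$ follows from a law of large numbers once the plug-in perturbations in \eqref{eq:hat-epsilon}--\eqref{eq:hat-Z} are shown to be negligible.

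\emph{Step 1 (exact decomposition).} Since $\frac1n\sum_{\xi=1}\theta^f=\frac1N\sum_{\xi=0}\theta^f=\theta^f$ and $\frac1n\sum_{\xi=1}\theta^\star=\theta^\star$, we can center every term in \eqref{eq:PPI++mean}. This gives the exact identity
\[
\hat\theta^\PPI_\lambda-\theta^\star=\frac1n\Big(\sum_g\varepsilon_g-\lambda\sum_g Z_g\Big)+\frac{\lambda}{n}\Big(\tfrac nN-r\Big)R_G,
\qquad R_G=\sum_g\sum_i(1-\xi_{g,i})(f(X_{g,i})-\theta^f).
\]
This makes the approximation \eqref{eq:ppi-approx} precise.

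\emph{Step 2 (mean-zero terms).} Because $(p_g,\xi_{g,\cdot})$ is independent of the cluster data, conditioning on the data gives the following expectations:
\begin{itemize}
\item $\Expect[\varepsilon_g]=\Expect[p_g]\,\Expect[\sum_i(Y_{g,i}-\theta^\star)]=0$, by the definition of $\theta^\star$.
\item $\Expect[Z_g]=(\Expect p_g-r(1-\Expect p_g))\,\Expect[\sum_i(f-\theta^f)]=0$, since both factors vanish.
\item $\Expect[R_G/G]=0$.
\end{itemize}
The finite-variance assumptions on $\sum_i|Y_{g,i}|$, $\sum_i|f(X_{g,i})|$ and $m_g$ give finite second moments for all cluster-level quantities (use $|\xi|\le1$ and $|r|<\infty$, since $\Expect p_g$ is bounded away from $1$).

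\emph{Step 3 (rates).} By the LLN, $n/G\to\kappa\Expect[p_g]$ and $N/G\to\kappa(1-\Expect[p_g])$, so $n/N\to r$. The CLT applied to the cluster sums of $\xi_{g,i}$ and $1-\xi_{g,i}$, followed by the delta method, gives $n/N-r=O_p(G^{-1/2})$. The CLT also gives $R_G=O_p(G^{1/2})$. Hence the remainder is $O_p(G^{-1})\cdot O_p(1)=o_p(n^{-1/2})$, since $n\asymp G$ and $\kappa\Expect p_g>0$.

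\emph{Step 4 (limit).} The bivariate CLT gives $G^{-1/2}\sum_g(\varepsilon_g,Z_g)\stackrel{d}{\to}\mathcal N(0,\Sigma)$. Therefore $G^{-1/2}\sum_g(\varepsilon_g-\lambda Z_g)\to\mathcal N(0,\Sigma_{11}-2\lambda\Sigma_{12}+\lambda^2\Sigma_{22})$. Multiplying by $\sqrt{G/n}\to(\kappa\Expect p_g)^{-1/2}$ and using $\Expect p_g=r/(1+r)$ yields the factor $(1+r)/(r\kappa)$ in \eqref{eq:asymp-variance}.

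\emph{Step 5 (variance estimator).} Let $n_g=\sum_i\xi_{g,i}$, $\delta_1=\hat\theta^\PPI_\lambda-\theta^\star$ and $\delta_2=\hat\theta^f-\theta^f$. Write $\hat\varepsilon_g=\varepsilon_g-\delta_1 n_g$ and
\[
\hat Z_g=Z_g+\big(r-\tfrac nN\big)\sum_i(1-\xi_{g,i})(f-\theta^f)-\delta_2\sum_i\big(\xi_{g,i}-\tfrac nN(1-\xi_{g,i})\big).
\]
Here $\delta_1\to0$ by Step 4, $\delta_2\to0$ by the LLN (ratio of two cluster means), and $n/N-r\to0$ by Step 3. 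Each perturbation is a scalar tending to zero in probability, times a cluster quantity $W_g$ with $\frac1G\sum_g W_g^2=O_p(1)$. This holds because these quantities are dominated by $m_g$ or $\sum_i|f(X_{g,i})|+|\theta^f|m_g$, which have finite second moments.

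Expanding $\widehat\Sigma$ and using Cauchy--Schwarz on the cross terms, $\widehat\Sigma=\frac1G\sum_g(\varepsilon_g,Z_g)^{\top}(\varepsilon_g,Z_g)+o_p(1)\to\Sigma$ by the LLN. Since $G/n\to1/(\kappa\Expect p_g)$, we get $\hat\sigma^2_\lambda\to\sigma^2_\lambda$. Slutsky then gives \eqref{eq:cluster-clt}, provided $\sigma^2_\lambda>0$, which is implicitly assumed.

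I expect Step 5 to be the main obstacle. The perturbations of $\hat Z_g$ involve the random ratio $n/N$ and the estimate $\hat\theta^f$ simultaneously. Their cross terms with $Z_g$ must be bounded uniformly using only second moments, which is why the $\frac1G\sum_g W_g^2=O_p(1)$ bounds are the crux. Step 3 is the other delicate point: it needs the $\sqrt G$-rate for $n/N$ rather than mere consistency.
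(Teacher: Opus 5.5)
Your proof is correct and follows the paper's argument. The paper obtains this theorem from the general M-estimator CLT (Theorem~\ref{thrm:general-clt} with squared loss, where the Hessian is identically one), and its key gradient lemma is exactly your Steps~1--3: replace $n/N$ by $r$, bound the remainder by $O_p(G^{-1/2})\cdot O_p(G^{1/2})/n$, then apply the i.i.d.\ cluster CLT with $G/n\to(1+r)/(r\kappa)$. Working directly with the closed form makes the identity exact and skips the consistency and Taylor-expansion steps; your Step~5 also supplies the plug-in details for $\widehat{\Sigma}$ (and the needed condition $\sigma^2_\lambda>0$) that the paper only asserts via the weak law of large numbers.
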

\begin{remark}\label{rem:CI}
    Let \(z_{1-\alpha/2}\) be the \(1-\alpha/2\) quantile for the standard normal distribution. Theorem~\ref{thrm:cluster-clt} implies that the interval
    \[\mathcal{C}_\alpha^\PPI = \left(\hat\theta^\PPI_\lambda \pm z_{1-\alpha/2}\hat\sigma_\lambda/\sqrt{n} \right),\]
    is an asymptotically valid confidence interval for \(\theta^\star\). Our open source implementation of Cluster-Robust PPI++ uses this confidence interval construction.
\end{remark}

Theorem~\ref{thrm:cluster-clt} is proved in the appendix. Example~\ref{ex:no-cluster} shows that Theorem~\ref{thrm:cluster-clt} can be used to recover the central limit theorem in \citet{angelopoulos_ppi_2024} for non-clustered data.

\begin{example}
    \label{ex:no-cluster}
    Consider the setting of \citet{angelopoulos_ppi_2024} where there pairs \((X_i,Y_i)\) are i.i.d. For i.i.d. data, the number of ``clusters'' \(G\) is equal to the total number of data points \(n+N\) and the label indicators \(\xi_i\) are i.i.d Bernoulli random variables with expectation \(\frac{r}{1+r}\). The expected cluster size \(\kappa\) is equal to 1.

    In the i.i.d. setting, we have
    \[
        (\varepsilon_g, Z_g) = \begin{cases}
            (Y_{g}-\theta^\star, f(X_{g})-\theta^f) & \text{if } \xi_g=1, \\
            (0, -r(f(X_{g})-\theta^f))              & \text{if } \xi_g=0.
        \end{cases}
    \]
    A direct calculation gives that
    \begin{align*}
        \Sigma = \frac{r}{1+r}\begin{bmatrix}
                                  \var(Y_g)         & \cov(Y_g, f(X_g)) \\
                                  \cov(Y_g, f(X_g)) & (1+r)\var(f(X_g))
                              \end{bmatrix}.
    \end{align*}
    And thus,
    \begin{align*}
        \sigma^2_\lambda & = \var(Y_g) - 2\lambda \cov(Y_g, f(X_g))              \\
                         & \quad{}+ \lambda^2(1+r)\var(f(X_g))                   \\
                         & =\var(Y_g-\lambda f(X_g)) + \lambda^2 r \var(f(X_g)),
    \end{align*}
    which agrees with Theorem~1 in \citet{angelopoulos_ppi_2024}.
\end{example}

\subsection{Power tuning}\label{sec:power-tuning}

The parameter \(\lambda\) can be adaptively tuned to minimize the asymptotic variance \(\sigma_\lambda^2\). \citet{angelopoulos_ppi_2024} call this procedure \emph{power tuning}. A simple calculation gives
\begin{equation}
    \lambda^\star := \argmin\{ \sigma^2_\lambda: \lambda \in \reals\} = \frac{\cov(\varepsilon_g, Z_g)}{\var(Z_g)}.\label{eq:lambda-star}
\end{equation}
The optimal \(\lambda^\star\) can be estimated using the empirical covariance \(\widehat{\Sigma}\) from \eqref{eq:sigma-hat}. This gives
\begin{equation}
    \hat{\lambda} := \frac{\widehat{\Sigma}_{1,2}}{\widehat{\Sigma}_{2,2}}.\label{eq:lambda-hat}
\end{equation}
 Specifically, following \cite{angelopoulos_ppi_2024}, we recommend first computing \(\hat\theta^\PPI_1\) and using this estimate to compute \(\hat{\varepsilon}_g\) and \(\hat{Z}_g\). These variables are then used to compute \(\hat\lambda\) via \eqref{eq:lambda-hat} which then gives the final estimate \(\hat \theta^\PPI_{\hat{\lambda}}\).

Under the above scheme, the plug-in estimate \(\hat{\lambda}\) converges in probability to \(\lambda^\star\). This implies that, under the assumptions of Theorem~\ref{thrm:cluster-clt},
\begin{equation}\label{eq:power-tuned-clt}
    \sqrt{n}(\hat\theta^\PPI_{\hat\lambda}-\theta^\star) \stackrel{d}{\rightarrow} \mathcal{N}(0,\sigma^2_{\lambda^\star}).
\end{equation}
Thus, the power tuned estimator \(\hat\theta^\PPI_{\hat\lambda}\) has asymptotic variance \(\sigma^2_{\lambda^\star} \le \sigma^2_0\) and so the power tuned PPI++ estimator is asymptotically at least as precise as the labels-only estimator \(\hat\theta^{\PPI}_0\). Again, the following example shows the connection between the current work and the i.i.d. setting of \citet{angelopoulos_ppi_2024}.

\begin{example}\label{ex:no-cluster-power-tune}
    Suppose that each cluster only contains a single data point as in Example~\ref{ex:no-cluster}. Then, using the formula for \(\Sigma\) given there
    \begin{align*}
        \lambda^\star & = \frac{\cov(\varepsilon_g, Z_g)}{\var(Z_g)}   = \frac{\cov(Y_g, f(X_g))}{(1+r)\var(f(X_g))},
    \end{align*}
    which agrees with Example~6.1 in \citet{angelopoulos_ppi_2024}.
\end{example}
The power-tuned estimator \(\hat\theta^\PPI_{\hat\lambda}\) is used in all the experiments in Section~\ref{sec:experiments}.

\subsection{The PPI correlation}\label{sec:gain}

Let \(\hat\theta^\PPI=\hat\theta^\PPI_{\hat\lambda}\) be the power-tuned estimator. By definition, the asymptotic variance of \(\hat\theta^\PPI\) is equal to
\begin{align*}
    \sigma^2_{\lambda^\star} & = \frac{1+r}{r\kappa}\left(\var(\varepsilon_g) - \frac{\cov(\varepsilon_g, Z_g)^2}{\var(Z_g)} \right) \\
                             & =\sigma_0^2\left(1 - \rho(\varepsilon_g,Z_g)^2\right),
\end{align*}
where \(\sigma_0^2 = \frac{1+r}{r\kappa}\var(\varepsilon_g)\) is the variance of the labels-only estimator \(\hat\theta^\PPI_0\) and \(\rho(\varepsilon_g,Z_g)\) is the correlation between \(\varepsilon_g\) and \(Z_g\) measured across clusters. Thus, the power tuned PPI estimator will have substantially lower variance than the labels-only estimator when \(\rho(\varepsilon_g,Z_g)\) is close to \(\pm1\). In the case of i.i.d. sampling, \(\rho(\varepsilon_g,Z_g)\) simplifies to \(\frac{1}{\sqrt{1+r}}\rho(Y,f(X))\) as in \citet{broska_mixed_2025}. In the clustered setting, the correlation \(\rho(\varepsilon_g,Z_g)\) is harder to interpret. The correlation \(\rho(\varepsilon_g,Z_g)\) depends on the proportion of labeled units, the proportion of partially labeled clusters, and the correlations \(\rho(Y_{g,i},f(X_{g,i}))\) and \(\rho(Y_{g,i}, f(X_{g,j}))\) where \(i\) and \(j\) denote two distinct units in the same cluster.

Further formulas for the PPI variance and the correlation \(\rho(\varepsilon_g,Z_g)\) are given in Section~\ref{sec:appn-var} of the appendix. These formulas can be used to compare different labeling strategies as described in the next section.

\subsection{Design choices}\label{sec:design}

With clustered data, researchers must decide both how many units to label and how to distribute those labels across clusters. Obtaining additional labels within a sampled cluster may cost less than obtaining labels from a new cluster. Yet the same number of labels can yield different precision depending on their allocation, so researchers must consider both the cost and the statistical accuracy of their labeling choices.

Appendix~\ref{sec:appn-var} provides formulas for comparing the asymptotic variance of alternative labeling designs. These formulas show that the asymptotic variance depends on three \emph{design parameters}: the number of clusters \(G\), the expected proportion of labeled units \(\mu_p\) (or mean labeling probability for short), and the intraclass correlation between label indicators in the same cluster
\(\eta_p\). In addition to these design parameters, the variance of the PPI++ estimator also depends on certain non-design moments that capture the variability and correlation of the labels and predictions. The non-design moments are defined in the Appendix~\ref{sec:appn-var}.

We suggest that researchers could estimate the non-design moments from pilot or suitable existing data, then assess the precision of candidate designs alongside their costs. Section~\ref{sec:experiments} illustrates how the design parameters \(\mu_p\) and \(\eta_p\) affect the RMSE of the PPI++ estimator in an empirical example. These simulations also verify the formulas in Appendix~\ref{sec:appn-var}.

\subsection{General estimators}\label{sec:general}

The ideas in the previous section extend naturally to general M-estimators. Let \(\ell_\theta(X,Y)\) be a loss function and define
\begin{equation}
    \theta^\star = \argmin_{\theta \in \reals^d}\frac{1}{\kappa} \Expect\left[\sum_{i=1}^{m_g}\ell_\theta(X_{g,i},Y_{g,i})\right].
\end{equation}
The PPI++ estimator for \(\theta^\star\) is
\begin{align*}
    \hat\theta^\PPI_\lambda & =\argmin_{\theta \in \reals^d}\left(L_n(\theta) - \lambda(L_n^f(\theta)-\widetilde{L}_N^f(\theta))\right),
\end{align*}
where
\begin{align*}
    L_n(\theta)               & =\frac{1}{n}\sum_{g=1}^G\sum_{i=1}^{m_g}\xi_{g,i}\ell_\theta(X_{g,i},Y_{g,i}),        \\
    L_n^f(\theta)             & =\frac{1}{n}\sum_{g=1}^G\sum_{i=1}^{m_g}\xi_{g,i}\ell_\theta(X_{g,i},f(X_{g,i})),     \\
    \widetilde{L}_N^f(\theta) & =\frac{1}{N}\sum_{g=1}^G\sum_{i=1}^{m_g}(1-\xi_{g,i})\ell_\theta(X_{g,i},f(X_{g,i})).
\end{align*}
Under appropriate regularity conditions and the assumptions of Section~\ref{sec:DGP}, the estimator \(\hat\theta^\PPI_\lambda\) is asymptotically normal as \(G \to \infty\). The asymptotic variance of \(\hat\theta^\PPI_\lambda\) is given by a ``sandwich'' form where the ``meat'' is of a similar form as \(\sigma^2_\lambda\) in \eqref{eq:asymp-variance}. The asymptotic variance can be estimated using plug-in estimates. The plug-in estimates can be used to power-tune \(\lambda\) and choose the variance minimizing PPI-estimator.

\section{Empirical Application}\label{sec:experiments}

Images and videos are an increasingly important source of data for studying social and political communication \citep{webb_williams_images_2020,boussalis_gender_2021,guilbeault_images_2024}.
Analyzing these data requires labels that translate visual content into variables of interest, such as whether a political figure appears in an image or video frame.
Human coding can provide high-quality labels but is costly to extend to large archives.
Computer vision algorithms offer a scalable alternative, but prediction errors can introduce bias into parameter estimates \citep{rister_portinari_maranca_correcting_2025}.
Video data pose an additional challenge because frames from the same recording can be highly correlated.
Treating these frames as independent can therefore understate uncertainty.
Using Japanese television news data from \citet{girbau_face_2024}, we illustrate how researchers can use Cluster-Robust PPI++ for precise parameter estimation while accounting for prediction errors and clustering.

\subsection{Simulation design}\label{sec:sim-design}

\begin{figure*}[t]
\centering
\includegraphics[width=\textwidth]{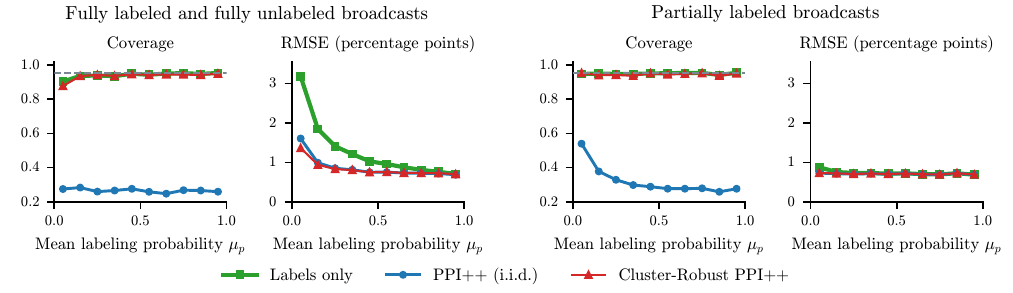}
\caption{Coverage, the share of 95\% confidence intervals that contain $\theta^\star=0.109$, and RMSE in percentage points, by the mean labeling probability $\mu_p$. In the fully labeled design (left), each broadcast is fully labeled or fully unlabeled. In the partially labeled design (right), most broadcasts are partially labeled. 2{,}500 replications per value of $\mu_p$.}
\label{fig:telestats}
\end{figure*}

The authors sampled one frame per second from excerpts of the broadcasts and manually annotated which of 41 target politicians appeared in each frame. The sampled footage comprises $M=67{,}062$ frames from $G=2{,}266$ broadcasts, where a broadcast is one airing of one program on a particular date. At least one target politician appears in $19{,}812$ of these frames; the remaining frames show none of them. We treat this collection of sampled frames and broadcasts as an artificial study population. The human annotations provide an observed outcome for every frame, alongside the corresponding automated prediction. This allows us to establish the population ground truth and then evaluate estimators that have access to only a subset of the human annotations.

Let $Y_{g,i}=1$ if the sitting prime minister (Shinz\={o} Abe, Yoshihide Suga, or Fumio Kishida, depending on the date) appears in sampled frame $i$ of broadcast $g$, and $Y_{g,i}=0$ otherwise. Treating the human annotations as correct, the target is the share of sampled seconds in which the prime minister is on screen,
\[
\theta^\star=\frac{1}{M}\sum_{g=1}^{G}\sum_{i=1}^{m_g}Y_{g,i}=0.109,
\]
where $m_g$ is the number of sampled frames in broadcast $g$ and $M=\sum_{g=1}^{G}m_g$. When broadcasts are drawn at random from the study population, $\theta^\star$ is the unit-level mean of Section~\ref{sec:cluster-se}. Each sampled second receives equal weight, so the target measures screen time in the sampled footage rather than in complete broadcasts.

The predictions come from the authors' face recognition system, whose output for the full archive they released with their replication data. The system detects faces with pretrained models, links detections of the same face across frames, and assigns an identity by comparing the faces with a few template images of each politician collected from the internet. We set $f(X_{g,i})=1$ if the released output identifies the sitting prime minister in frame $i$ of broadcast $g$, and $f(X_{g,i})=0$ otherwise.

Frames are nested in broadcasts, which contribute between 3 and 438 sampled frames (median 18). We treat each broadcast as a cluster, which allows both outcomes and prediction errors to be correlated within broadcasts. The intraclass correlation within broadcasts is $0.71$ for the outcome and $0.50$ for the prediction error $Y_{g,i}-f(X_{g,i})$.

We evaluate the estimators by repeatedly sampling from the study population. In each of 2{,}500 replications, we draw $G=2{,}266$ broadcasts with replacement and treat each draw as a separate cluster, so that clusters are i.i.d.\ as in Section~\ref{sec:DGP}. We then reveal the human labels of a subset of frames under two designs, each with mean labeling probability $\mu_p$. In the \emph{fully labeled} design, each broadcast is labeled with probability $\mu_p$, including all of its frames, so no broadcast is partially labeled. In the \emph{partially labeled} design, each frame is labeled with probability $\mu_p$, so most broadcasts are partially labeled. We vary $\mu_p$ from $0.05$ to $0.95$ in steps of $0.10$. In Figure~\ref{fig:telestats}, we compare three estimators. The first is the labels-only estimator, the mean of the labeled outcomes with a cluster-robust standard error. The second is PPI++ with the i.i.d.\ standard error of \citet{angelopoulos_ppi_2024}, and the third is Cluster-Robust PPI++.

\subsection{Results on coverage and RMSE}\label{sec:results}

Figure~\ref{fig:telestats} shows coverage and RMSE for both the fully labeled and partially labeled designs, and Appendix~\ref{app:telestats} reports all values. PPI++ with i.i.d.\ standard errors undercovers in both designs because frames are treated as independent despite their strong correlation. The confidence intervals are roughly one-third to one-sixth as wide as the cluster-robust intervals and thus contain $\theta^\star$ in only 25\% to 54\% of replications. Cluster-Robust PPI++ achieves coverage between 93\% and 95\% in every setting but one. In the fully labeled design with $\mu_p=0.05$, the number of labeled clusters is small, and both Cluster-Robust PPI++ (88\%) and the labels-only estimator (90\%) undercover.

The precision gains depend on the labeling design. In the fully labeled design, Cluster-Robust PPI++ reduces the RMSE of the labels-only estimator most when labels are scarce, by about half for $\mu_p$ up to $0.15$, and the reduction shrinks steadily to 10\% at $\mu_p=0.75$. In the partially labeled design, the gains are small (17\% at $\mu_p=0.05$, at most 5\% for larger $\mu_p$), because most of the remaining error comes from which broadcasts are sampled. Appendix~\ref{app:regression} shows that the same pattern holds for the coefficients of a multiple regression.

We also note that relying on the predictions alone is the least defensible option for our application. When we average the predictions of all sampled frames and ignore the labels, the resulting confidence intervals cover the true parameter in only about 20\% of replications, even after accounting for dependence within broadcasts (Appendix~\ref{app:telestats}). This is the lowest coverage of any estimator. The reason is that the face recognition system, although its recall is high at 82\%, misses 18\% of the prime minister's appearances. As a result, the estimate is biased downward by about 1.8 percentage points, about a sixth of $\theta^\star$.

\subsection{Choosing a labeling design}\label{sec:labeling-design}

\begin{figure}[t]
\centering
\includegraphics[width=\columnwidth]{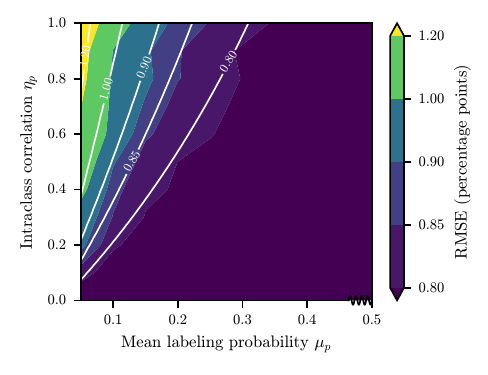}
\caption{RMSE of Cluster-Robust PPI++ in percentage points, by the mean labeling probability $\mu_p$ and the correlation $\eta_p$ between labeling indicators within a broadcast. Colors show the simulation with 2{,}500 replications per design, and white lines show the RMSE implied by the variance formula in Appendix~\ref{sec:appn-var}, at the same levels. $\eta_p=0$ is the partially labeled design and $\eta_p=1$ the fully labeled design of Figure~\ref{fig:telestats}. The horizontal axis stops at $\mu_p=0.5$ because the RMSE varies little for larger $\mu_p$.}
\label{fig:designs}
\end{figure}

Figure~\ref{fig:telestats} compares two extreme designs. Between them are designs in which broadcasts differ in how many of their frames are labeled. As in Section~\ref{sec:DGP}, each broadcast $g$ has a labeling probability $p_g$, and each of its frames is labeled with probability $p_g$. We describe each design by its mean labeling probability $\mu_p$ and by $\eta_p$, the correlation between labeling indicators within a broadcast (Appendix~\ref{sec:appn-var}). The two endpoints correspond to independent frame labeling ($\eta_p=0$) and whole-broadcast labeling ($\eta_p=1$). For intermediate designs, we draw each broadcast's labeling probability from a beta distribution chosen to give the desired $\mu_p$ and $\eta_p$. We simulate 209 designs, each with 2{,}500 replications that share the same draws of broadcasts, and Figure~\ref{fig:designs} shows those with $\mu_p$ up to 0.5. For a given share of labeled frames, the RMSE of Cluster-Robust PPI++ is smallest, up to Monte Carlo noise, when the labels are spread across all broadcasts and grows as they are concentrated in fewer broadcasts. The difference is large when few frames are labeled. At $\mu_p=0.05$, concentrating the labels in whole broadcasts nearly doubles the RMSE. When at least half of the frames are labeled, it raises the RMSE by at most 7\%.

The same designs let us check the variance formula of Appendix~\ref{sec:appn-var}, which researchers can use to plan a labeling design. To evaluate the formula, we use all human labels and predictions in the artificial study population to calculate its six moments, which summarize variation and association in outcomes and predictions. The simulated estimators still receive only the labels revealed under each design. This comparison therefore tests the variance approximation with known moments; it does not test how accurately a limited pilot estimates those moments. The simulated RMSE is within 5\% of the formula, shown as white lines in Figure~\ref{fig:designs}, in all 209 designs. The formula also reproduces the pattern above. When at least half of the frames are labeled, the formula puts the largest increase from concentrating the labels at 6\%, against 7\% in the simulation.

\section{Conclusion}

We introduced Cluster-Robust PPI++ to help researchers use predictions for precise parameter estimation while accounting for the dependence in their data and obtaining valid confidence intervals. We derived closed-form cluster-robust standard errors for PPI++ that are asymptotically valid when clusters are fully labeled, partially labeled or fully unlabeled. A cluster-aware power tuning rule weights the predictions to minimize the variance. In our application to television news, PPI++ intervals that ignore the clustering cover the true parameter far less often than their nominal rate, while Cluster-Robust PPI++ can restore nominal coverage. The precision of parameter estimates depends on how labels are allocated. When labels are scarce, spreading them across many clusters can give much more precise estimates than labeling whole clusters. Our variance formulas predict this precision closely, allowing researchers to compare labeling designs before they collect labels.

Several limitations of our method point to opportunities for future work. First, our variance estimator has no small-sample correction. Small-sample corrections and the wild cluster bootstrap are well developed for linear regression \citep{cameron_practitioners_2015, mackinnon_cluster_2023}, and adapting them to PPI++ would make the method more reliable when few clusters are labeled. Second, our theory assumes one level of clustering. Future work could extend it to multiway clustering, such as video frames grouped both by program and by date, and to panel-type estimators with cluster-specific parameters, such as fixed effects. Third, labels must be drawn with probabilities that do not depend on the data, which rules out choosing labels based on the predictions or on labels already collected. Active labeling of clusters, building on active statistical inference \citep{li_robust_2025, gligoric_can_2025}, could direct labels to the clusters where they reduce the variance most. Finally, our variance formulas describe the precision of a labeling design but not its cost. Labeling a whole cluster can be cheaper per unit than obtaining labels from different clusters, so combining the formulas with a cost model would let researchers allocate labels across and within clusters to get the most precision for a given budget.

\subsection*{Acknowledgments}
We thank Naoki Egami, Brandon Stewart, Tijana Zrnic, and Emmanuel Cand\`es for helpful comments.

\subsection*{Code and Data Availability}
Cluster-Robust PPI++ is implemented at \url{https://github.com/Michael-Howes/ppi_py}. The code and data for reproducing the analysis in this paper will be made available upon publication.

\bibliography{references}

\begin{thebibliography}{}

\bibitem[Angelopoulos et~al., 2023]{angelopoulos_prediction-powered_2023-1}
Angelopoulos, A.~N., Bates, S., Fannjiang, C., Jordan, M.~I., and Zrnic, T.
  (2023).
\newblock Prediction-powered inference.
\newblock {\em Science}, 382(6671):669--674.

\bibitem[Angelopoulos et~al., 2024]{angelopoulos_ppi_2024}
Angelopoulos, A.~N., Duchi, J.~C., and Zrnic, T. (2024).
\newblock {PPI}++: Efficient prediction-powered inference.
\newblock arXiv:2311.01453.

\bibitem[Arel-Bundock et~al., 2026]{arel-bundock_quantitative_2026}
Arel-Bundock, V., Briggs, R.~C., Doucouliagos, H., Avi{\~n}a, M.~M., and
  Stanley, T.~D. (2026).
\newblock Quantitative political science research is greatly underpowered.
\newblock {\em The Journal of Politics}, 88(1):36--46.

\bibitem[Boussalis et~al., 2021]{boussalis_gender_2021}
Boussalis, C., Coan, T.~G., Holman, M.~R., and M{\"u}ller, S. (2021).
\newblock Gender, candidate emotional expression, and voter reactions during
  televised debates.
\newblock {\em American Political Science Review}, 115(4):1242--1257.

\bibitem[Broska et~al., 2025]{broska_mixed_2025}
Broska, D., Howes, M., and van Loon, A. (2025).
\newblock The mixed subjects design: Treating large language models as
  potentially informative observations.
\newblock {\em Sociological Methods \& Research}, 54(3):1074--1109.

\bibitem[Cameron and Miller, 2015]{cameron_practitioners_2015}
Cameron, A.~C. and Miller, D.~L. (2015).
\newblock A practitioner's guide to cluster-robust inference.
\newblock {\em Journal of Human Resources}, 50(2):317--372.

\bibitem[Egami et~al., 2023]{egami_using_2023}
Egami, N., Hinck, M., Stewart, B.~M., and Wei, H. (2023).
\newblock Using imperfect surrogates for downstream inference: Design-based
  supervised learning for social science applications of large language models.
\newblock In {\em Advances in Neural Information Processing Systems},
  volume~36, pages 68589--68601.

\bibitem[Girbau et~al., 2024]{girbau_face_2024}
Girbau, A., Kobayashi, T., Renoust, B., Matsui, Y., and Satoh, S. (2024).
\newblock Face detection, tracking, and classification from large-scale news
  archives for analysis of key political figures.
\newblock {\em Political Analysis}, 32:221--239.

\bibitem[Gligori{\'c} et~al., 2025]{gligoric_can_2025}
Gligori{\'c}, K., Zrnic, T., Lee, C., Cand{\`e}s, E., and Jurafsky, D. (2025).
\newblock Can unconfident {LLM} annotations be used for confident conclusions?
\newblock In {\em Proceedings of the 2025 Conference of the Nations of the
  Americas Chapter of the Association for Computational Linguistics: Human
  Language Technologies (Volume 1: Long Papers)}, pages 3514--3533.

\bibitem[Guilbeault et~al., 2024]{guilbeault_images_2024}
Guilbeault, D., Delecourt, S., Hull, T., Desikan, B.~S., Chu, M., and Nadler,
  E. (2024).
\newblock Online images amplify gender bias.
\newblock {\em Nature}, 626:1049--1055.

\bibitem[Hullman et~al., 2026]{hullman_this_2026}
Hullman, J., Broska, D., Sun, H., and Shaw, A. (2026).
\newblock This human study did not involve human subjects: Validating {LLM}
  simulations as behavioral evidence.
\newblock arXiv:2602.15785.

\bibitem[Kluger et~al., 2025]{kluger_prediction-powered_2025}
Kluger, D.~M., Lu, K., Zrnic, T., Wang, S., and Bates, S. (2025).
\newblock Prediction-powered inference with imputed covariates and nonuniform
  sampling.
\newblock arXiv:2501.18577.

\bibitem[Li et~al., 2025]{li_robust_2025}
Li, P., Zrnic, T., and Cand{\`e}s, E. (2025).
\newblock Robust sampling for active statistical inference.
\newblock In {\em Advances in Neural Information Processing Systems},
  volume~38, pages 68727--68756.

\bibitem[Ludwig et~al., 2026]{ludwig_large_2026}
Ludwig, J., Mullainathan, S., and Rambachan, A. (2026).
\newblock Large language models: An applied econometric framework.
\newblock {\em Annual Review of Economics}, 18(1):283--316.

\bibitem[MacKinnon et~al., 2023]{mackinnon_cluster_2023}
MacKinnon, J.~G., Nielsen, M.~{\O}., and Webb, M.~D. (2023).
\newblock Cluster-robust inference: A guide to empirical practice.
\newblock {\em Journal of Econometrics}, 232(2):272--299.

\bibitem[Marek et~al., 2022]{marek_reproducible_2022}
Marek, S., Tervo-Clemmens, B., Calabro, F.~J., Montez, D.~F., Kay, B.~P.,
  Hatoum, A.~S., et~al. (2022).
\newblock Reproducible brain-wide association studies require thousands of
  individuals.
\newblock {\em Nature}, 603(7902):654--660.

\bibitem[Mehrotra et~al., 2026]{mehrotra_multi-perspective_2026}
Mehrotra, N., Visokay, A., and Gligori{\'c}, K. (2026).
\newblock Multi-perspective {LLM} annotations for valid analyses in subjective
  tasks.
\newblock arXiv:2603.21404.

\bibitem[Mozer, 2026]{mozer2026ppi}
Mozer, R. (2026).
\newblock {PPI} is the difference estimator: Recognizing the survey sampling
  roots of prediction-powered inference.
\newblock arXiv:2603.19160.

\bibitem[Newey and McFadden, 1994]{newey1994large}
Newey, W.~K. and McFadden, D. (1994).
\newblock Large sample estimation and hypothesis testing.
\newblock In Engle, R.~F. and McFadden, D., editors, {\em Handbook of
  Econometrics}, volume~4, pages 2111--2245. Elsevier.

\bibitem[Rister Portinari~Maranca et~al.,
  2025]{rister_portinari_maranca_correcting_2025}
Rister Portinari~Maranca, A., Chung, J., Hinck, M., Wolsky, A.~D., Egami, N.,
  and Stewart, B.~M. (2025).
\newblock Correcting the measurement errors of {AI}-assisted labeling in image
  analysis using design-based supervised learning.
\newblock {\em Sociological Methods \& Research}, 54(3):984--1016.

\bibitem[Salerno et~al., 2026]{salerno_spatially_2026}
Salerno, S., Wu, Z., and McCormick, T.~H. (2026).
\newblock Spatially robust inference with predicted and missing at random
  labels.
\newblock arXiv:2603.11368.

\bibitem[S{\"a}rndal et~al., 2003]{sarndal2003model}
S{\"a}rndal, C.-E., Swensson, B., and Wretman, J. (2003).
\newblock {\em Model assisted survey sampling}.
\newblock Springer.

\bibitem[van Loon and Kanopka, 2026]{van_loon_using_2026}
van Loon, A. and Kanopka, K. (2026).
\newblock Using large language models as a source of human behavioral data in
  social science experiments.
\newblock SocArXiv preprint, doi:10.31235/osf.io/y74mu\_v1.

\bibitem[Webb~Williams et~al., 2020]{webb_williams_images_2020}
Webb~Williams, N., Casas, A., and Wilkerson, J.~D. (2020).
\newblock {\em Images as Data for Social Science Research: An Introduction to
  Convolutional Neural Nets for Image Classification}.
\newblock Cambridge University Press.

\end{thebibliography}

\clearpage
\appendix
\onecolumn
\begin{center}{\LARGE\bfseries Appendix\par}\end{center}

\section{Cluster standard errors for M-estimators}\label{app:methods}

This appendix mirrors the structure of Section~\ref{sec:methods}. First, the general PPI++ estimator is discussed, followed by assumptions on the data generating process. A central limit theorem for the PPI++ estimator is then proved which is used to construct valid cluster robust standard errors. The final two sections discuss power tuning and the PPI correlation.

\subsection{The PPI++ estimator}

The PPI++ estimator is introduced in \citet{angelopoulos_ppi_2024}. Let \(\ell_\theta(x,y)\) be a loss function for the parameter \(\theta \in \reals^{d}\). This loss function defines a parameter \(\theta^\star\) via
\begin{equation*}
    \theta^\star =\argmin_\theta \Expect[\ell_\theta(X_i,Y_i)],
\end{equation*}
where \((X_i,Y_i)\) represents a single data point. Let \((X_i,Y_i)_{i=1}^n\) be a dataset of labeled data and let \((\widetilde{X}_i)_{i=1}^N\) be a dataset of unlabeled data. Define
\begin{align*}
    L_n(\theta)               & = \frac{1}{n}\sum_{i=1}^n \ell_\theta(X_i,Y_i),                            \\
    L_n^f(\theta)             & = \frac{1}{n}\sum_{i=1}^n \ell_\theta(X_i, f(X_i)),                        \\
    \widetilde{L}_N^f(\theta) & = \frac{1}{N}\sum_{i=1}^N \ell_\theta(\widetilde{X}_i, f(\widetilde{X}_i)) \\
    L_\lambda^{\PPI}(\theta)  & =L_n(\theta) -\lambda(L_n^f(\theta)-\widetilde{L}_N^f(\theta)).
\end{align*}
The PPI++ estimator for \(\theta^\star\) is defined to be
\begin{equation}
    \hat{\theta}^\PPI_\lambda = \argmin_\theta L_\lambda^\PPI(\theta).
\end{equation}
Note that if \(\ell_\theta(x,y)=\frac{1}{2}(y-\theta)^2\), then \(\theta^\star = \Expect[Y_i]\) and \(\hat\theta^\PPI_\lambda\) is equal to the expression in \eqref{eq:PPI++mean}. Specifically,

\begin{align*}
    \hat\theta^{\PPI}_\lambda & = \frac{1}{n}\sum_{i=1}^n Y_i - \lambda\left(\frac{1}{n}\sum_{i=1}^n f(X_i)-\frac{1}{N}\sum_{i=1}^N f(\widetilde{X}_i)\right) \\
                              & =\frac{\lambda}{N}\sum_{i=1}^N f(\widetilde{X}_i) + \frac{1}{n}\sum_{i=1}^n Y_i-\lambda f(X_i).                               \\
\end{align*}

\subsection{Data generating process for clustered data}\label{sec:appn-assumptions}

The assumptions for the data generating process are the same as in Section~\ref{sec:DGP}. Here we list them explicitly and define notation that will be used in subsequent sections.

\begin{assumption}[i.i.d. clusters]\label{as:iid}
    Clusters containing both features and labels are sampled i.i.d. from a cluster distribution. An individual cluster is represented as \((X_{g,i},Y_{g,i})_{i=1}^{m_g}\) where \(g=1,\ldots,G\) indexes the cluster, \(m_g\) is the size of cluster \(g\), and \(i=1,\ldots,m_g\) indexes the units within the cluster. As before, let \(\kappa = \Expect[m_g]\)
\end{assumption}

\begin{assumption}[Uniqueness of estimand]\label{as:unique}
    The estimand \(\theta^\star\) is the unique minimizer of the population loss
    \[
        \theta \mapsto \frac{1}{\kappa}\Expect\left[\sum_{i=1}^{m_g}\ell_\theta(X_{g,i},Y_{g,i})\right]=:L(\theta).
    \]
    And that the Hessian \(H_{\theta^\star} = \nabla^2_\theta L(\theta^\star)\) is positive definite.
\end{assumption}

\begin{assumption}[Bernoulli labeling within cluster]\label{as:labels}
    For each unit \((X_{g,i},Y_{g,i})\), there exists a label indicator \(\xi_{g,i} \in \{0,1\}\). If \(\xi_{g,i}=1\), then both \(X_{g,i}\) and \(Y_{g,i}\) are observed. If \(\xi_{g,i}=0\), then only \(X_{g,i}\) is observed. It is assumed that there exists a cluster level probability \(p_g\) such that \(p_g\) is independent of \((X_{g,i},Y_{g,i})_{i=1}^{m_g}\) and \((\xi_{g,i})_{i=1}^{m_g} \mid p_g\) are i.i.d Bernoulli with \(\prob(\xi_{g,i}=1\mid p_g) = p_g\). It is assumed that the probabilities \((p_g)_{g=1}^G\) are themselves i.i.d. with \(0 < \Expect[p_g]<1\). Define \(r = \Expect[p_g]/(1-\Expect[p_g])\) to be the expected ratio of labeled to unlabeled data. Each probability \(p_g\) is assumed to be independent of the predictors and outcomes \((X_{g,i}, Y_{g,i})_{i=1}^{m_g}\) as well as the cluster size \(m_g\).
\end{assumption}

\begin{assumption}[Convexity]\label{as:convex}
    There exists a subset \(\Lambda \subseteq \reals\) such that for all \(\lambda \in \Lambda\) and \(x_1,y_1\) and \(x_2\), the function
    \[
        \theta \mapsto L_n(\theta) - \lambda(L_n^f(\theta)-\widetilde{L}_N^f(\theta))
    \]
    is convex for all \((X_i,Y_i)_{i=1}^n\) and \((\widetilde{X}_i)_{i=1}^N\)
\end{assumption}
\begin{assumption}[Regularity conditions]\label{as:reg}
    There exists a compact convex set \(U\) which contains a neighborhood of \(\theta^\star\) on which \(\ell_\theta(x,y)\) and \(\ell_\theta(x, f(x))\) are twice continuously differentiable in \(\theta\).

    It is also assumed that for all \(\theta \in U\), the sums \(\sum_{i=1}^{m_g}\Vert \nabla_\theta \ell_\theta(X_{g,i},Y_{g,i})\Vert\) and \(\sum_{i=1}^{m_g}\Vert \nabla_\theta \ell_\theta(X_{g,i}, f(X_{g,i}))\Vert\) have finite variance.

    Finally, it assumed that \(\Expect[m_g^2] < \infty\) and \(\Expect[A_g] < \infty\) where
    \[
        A_g = \sum_{i=1}^{m_g}\sup_{\theta \in U}\left(\Vert \nabla^2_\theta\ell_\theta(X_{g,i},Y_{g,i})\Vert + \Vert\nabla^2_\theta\ell_\theta(X_{g,i},f(X_{g,i}))\Vert \right).
    \]
\end{assumption}

\begin{remark}
    For mean estimation, Assumption~\ref{as:convex} holds with \(\Lambda = \reals\). This is because if \(\ell_\theta(x,y) = \frac{1}{2}(\theta-y)^2\), then
    \begin{align*}
        \frac{d}{d\theta}\left(L_n(\theta) - \lambda(L_n^f(\theta)-\widetilde{L}_N^f(\theta))\right)&=\frac{1}{n}\sum_{i=1}^n (\theta - Y_i) + \lambda\left(\frac{1}{n}\sum_{i=1}^n f(X_i) - \frac{1}{N}\sum_{i=1}^N f(\widetilde{X}_i)\right),
    \end{align*}
    which is increasing in \(\theta\) and hence \(\theta \mapsto L_n(\theta) - \lambda(L_n^f(\theta)-\widetilde{L}_N^f(\theta))\) is convex.

    For generalized linear models with canonical link functions (GLMs), Assumption~\ref{as:convex} holds with \(\Lambda = [0,1]\) \citep{angelopoulos_ppi_2024}. 
\end{remark}

\subsection{Cluster robust standard errors}

Suppose that our data consists of clusters generated according to Section~\ref{sec:appn-assumptions}. We will again assume that the estimand \(\theta^\star\) is defined via a unit-level loss function \(\ell_\theta(x,y)\) by
\begin{equation}
    \theta^\star = \argmin_{\theta \in \reals^d} \frac{1}{\kappa}\Expect\left[\sum_{i=1}^{m_g}\ell_{\theta}(X_{g,i},Y_{g,i})\right].
\end{equation}
We also allow for \(\theta^\star\) to depend on unit-level or cluster-level weights such as 
\[
    \theta^\star = \argmin_{\theta \in \reals^d} \Expect\left[\frac{1}{m_g}\sum_{i=1}^{m_g}\ell_{\theta}(X_{g,i},Y_{g,i})\right].
\]
In a slight abuse of notation, we will continue to write \(\sum_{i=1}^{m_g}\ell_{\theta}(X_{g,i},Y_{g,i})\) for the weighted version. When weights are used, all assumptions in Section~\ref{sec:appn-assumptions} are assumed to hold for the weighted loss function.

In the clustered setting, the PPI++ estimator \citep{angelopoulos_ppi_2024} can be written as
\begin{equation}
    \hat\theta^\PPI_\lambda = \argmin_{\theta \in \reals^d}\frac{1}{n}\sum_{g=1}^G\sum_{i=1}^{m_g} \xi_{g,i}\ell_{\theta}(X_{g,i},Y_{g,i})-\lambda\left(\xi_{g,i}-\frac{n}{N}(1-\xi_{g,i})\right)\ell_{\theta}(X_{g,i},f(X_{g,i})).
\end{equation}

For \(g=1,\ldots,G\) define
\begin{align*}
    \varepsilon_g & = \sum_{i=1}^{m_g}\xi_{g,i}\nabla_\theta \ell_{\theta^\star}(X_{g,i}, Y_{g,i}),                               \\
    Z_g           & =\sum_{i=1}^{m_g}(\xi_{g,i} - r(1-\xi_{g,i}))(\nabla_\theta\ell_{\theta^\star}(X_{g,i}, f(X_{g,i})) - \mu_f),
\end{align*}
where \(\mu_f = \Expect\left[\sum_{i=1}^{m_g}\nabla_\theta\ell_{\theta^\star}(X_{g,i}, f(X_{g,i}))\right]/\Expect[m_g]\). Let \(\Sigma \in \reals^{2d \times 2d}\) be the population covariance matrix \(\cov(\varepsilon_g, Z_g)\). Also define
\[
    H_{\theta^\star} = \frac{1}{\kappa}\Expect\left[\sum_{i=1}^{m_g} \nabla^2_\theta\ell_{\theta^\star}(X_{g,i},Y_{g,i})\right] \in \reals^{d \times d},
\]
where \(\kappa = \Expect[m_g]\). Define the sample analogs of \(\varepsilon_g, \mu_f, Z_g,\Sigma\) and \(H_{\theta^\star}\) by
\begin{align*}
    \hat\varepsilon_g & = \sum_{i=1}^{m_g}\xi_{g,i}\nabla_\theta \ell_{\hat\theta^\PPI_\lambda}(X_{g,i}, Y_{g,i})                                                                                                                             \\
    \hat\mu_f         & =\frac{1}{n+N}\sum_{g=1}^G\sum_{i=1}^{m_g}\nabla_{\theta} \ell_{\hat\theta^\PPI_\lambda}(X_{g,i},f(X_{g,i}))                                                                                                          \\
    \hat Z_g          & =\sum_{i=1}^{m_g}\left(\xi_{g,i}-\frac{n}{N}(1-\xi_{g,i})\right)(\nabla_\theta\ell_{\hat\theta^\PPI_\lambda}(X_{g,i}, f(X_{g,i}))-\hat{\mu}_f )\\
    \hat{\Sigma}      & =\frac{1}{G}\sum_{g=1}^G \begin{bmatrix}
                                                     \hat{\varepsilon}_g\hat{\varepsilon}_g^\top & \hat{\varepsilon}_g\hat{Z}_g^\top \\
                                                     \hat{Z}_g\hat{\varepsilon}_g^\top           & \hat{Z}_g\hat{Z}_g^\top
                                                 \end{bmatrix}                                                                                                              \\
    \hat{H}           & =\frac{1}{n}\sum_{g=1}^G\sum_{i=1}^{m_g} \xi_{g,i}\nabla^2_\theta\ell_{\hat\theta^\PPI_\lambda}(X_{g,i},Y_{g,i})
\end{align*}
Note that \(\hat\varepsilon_g, \hat\mu_f, \hat Z_g,\hat\Sigma\) and \(\hat H\) all implicitly depend on \(\lambda\).

The following result gives a central limit theorem for \(\hat\theta^\PPI_\lambda\) that is a generalization of Theorem~\ref{thrm:cluster-clt}.

\begin{theorem}[Central limit theorem for the general PPI++ estimator]
    \label{thrm:general-clt}
    Suppose that the assumptions in Section~\ref{sec:appn-assumptions} hold and that \(\lambda \in \Lambda\). Then, as \(G \to \infty\),
    \begin{align*}
        \sqrt{n}(\hat\theta^\PPI_\lambda-\theta^\star) \stackrel{d}{\to} \mathcal{N}(0,V_\lambda),
    \end{align*}
    where \(V_\lambda \in \reals^{d \times d}\) is given by
    \begin{align*}
        V_\lambda & =\frac{1+r}{r\kappa}H_{\theta^\star}^{-1}\left(\Sigma_{1:d,1:d}-\lambda(\Sigma_{1:d,(d+1):(2d)}+\Sigma_{(d+1):(2d),1:d})+\lambda^2 \Sigma_{(d+1):(2d),(d+1):(2d)}\right)H_{\theta^\star}^{-1}  \\
                  & =\frac{1+r}{r\kappa}H_{\theta^\star}^{-1}\left(\cov(\varepsilon_g,\varepsilon_g)-\lambda(\cov(\varepsilon_g,Z_g)+\cov(Z_g,\varepsilon_g))+\lambda^2 \cov(Z_g,Z_g)\right)H_{\theta^\star}^{-1}.
    \end{align*}
    Furthermore, if \(V_\lambda\) is positive definite and
    \[
        \hat{V}_\lambda = \frac{G}{n}\hat{H}^{-1}\left(\hat\Sigma_{1:d,1:d}-\lambda(\hat\Sigma_{1:d,(d+1):(2d)}+\hat\Sigma_{(d+1):(2d),1:d})+\lambda^2 \hat\Sigma_{(d+1):(2d),(d+1):(2d)}\right)\hat H^{-1},
    \]
    then \(\hat{V}_\lambda\) converges to \(V_\lambda\) in probability and thus
    \begin{align*}
        \sqrt{n}\hat{V}_\lambda^{-1/2}\left(\hat\theta^\PPI_\lambda - \theta^\star\right) \stackrel{d}{\to} \mathcal{N}(0,I_d),
    \end{align*}
    where \(I_d\) is the \(d \times d\) identity matrix.
\end{theorem}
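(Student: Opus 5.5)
The plan is the standard M-estimation route: consistency via convexity, a Taylor expansion of the estimating equation, and a cluster-level CLT for the score. Write the empirical objective as
\[
L^\PPI_\lambda(\theta)=\frac{1}{n}\sum_{g=1}^G\sum_{i=1}^{m_g}\Big[\xi_{g,i}\ell_\theta(X_{g,i},Y_{g,i})-\lambda\big(\xi_{g,i}-\tfrac{n}{N}(1-\xi_{g,i})\big)\ell_\theta(X_{g,i},f(X_{g,i}))\Big].
\]
Since \(p_g\), and hence \(\xi_{g,i}\), is independent of the cluster data and \(m_g\), we get \(\Expect[\sum_i \xi_{g,i}h(X_{g,i},Y_{g,i})]=\Expect[p_g]\,\Expect[\sum_i h]\). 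By the law of large numbers over i.i.d.\ clusters, \(n/G\to\kappa\Expect[p_g]\), \(N/G\to\kappa(1-\Expect[p_g])\), and so \(n/N\to r\).

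\textbf{Consistency.} Apply the i.i.d.\ LLN cluster-wise to the summands, which is legitimate by the moment conditions of Assumption~\ref{as:reg}. This shows that \(L^\PPI_\lambda(\theta)\) converges pointwise on \(U\) to
\[
L(\theta)-\lambda\big(L^f(\theta)-L^f(\theta)\big)=L(\theta).
\]
The two \(f\)-terms cancel because labeled and unlabeled units have the same unit-level expectation once weighted by \(\xi\) and normalized by \(n\) and \(N\) respectively. By convexity (Assumption~\ref{as:convex}, \(\lambda\in\Lambda\)), pointwise convergence upgrades to uniform convergence on compacts (Newey--McFadden, Thm.~2.7). Together with uniqueness (Assumption~\ref{as:unique}) this gives \(\hat\theta^\PPI_\lambda\to\theta^\star\) in probability.

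\textbf{Expansion and the key reduction.} The first-order condition \(\nabla L^\PPI_\lambda(\hat\theta)=0\) and a mean-value expansion give
\[
0=\nabla L^\PPI_\lambda(\theta^\star)+\tilde H(\hat\theta-\theta^\star).
\]
Here \(\tilde H\to H_{\theta^\star}\) in probability, by a uniform LLN using the envelope \(A_g\) and consistency. In \(\tilde H\) the \(f\)-Hessian parts again cancel in the limit.

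The main obstacle is the score, because \(n/N\) is random and the labeled and unlabeled sets share clusters. I would write \(\frac{n}{N}=r+(\frac{n}{N}-r)\). Then
\[
n\nabla L^\PPI_\lambda(\theta^\star)=\sum_g\varepsilon_g-\lambda\sum_g Z_g-\lambda\Big(r-\frac{n}{N}\Big)\sum_{g,i}(1-\xi_{g,i})\big(\nabla\ell^f_{g,i}-\mu_f\big)+R.
\]
The remainder \(R\) collects terms of the form \((\xi_{g,i}-\frac{n}{N}(1-\xi_{g,i}))\mu_f\) summed over all units. The crucial identity is \(\sum_{g,i}(\xi_{g,i}-\frac{n}{N}(1-\xi_{g,i}))=n-\frac{n}{N}N=0\), so centering at \(\mu_f\) is free and \(R=0\) exactly.

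For the third term, \(r-\frac{n}{N}=O_p(G^{-1/2})\) by the CLT, and \(\frac1G\sum(1-\xi)(\nabla\ell^f-\mu_f)\to0\) by the LLN, since \(\Expect[(1-\xi_{g,i})(\nabla\ell^f-\mu_f)]\) summed over the cluster is \((1-\Expect p_g)\cdot0\). The product is therefore \(o_p(\sqrt G)\).

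Checking \(\Expect\varepsilon_g=\Expect Z_g=0\) uses the same independence: \(\Expect\sum_i\nabla\ell_{\theta^\star}=0\) by first-order optimality, and \(\Expect p_g-r(1-\Expect p_g)=0\). The multivariate CLT on i.i.d.\ \((\varepsilon_g,Z_g)\), which have finite second moments by Assumption~\ref{as:reg} and \(|\xi|\le1\), then gives
\[
G^{-1/2}\sum_g(\varepsilon_g-\lambda Z_g)\Rightarrow\mathcal N\big(0,\;\Sigma_{11}-\lambda(\Sigma_{12}+\Sigma_{21})+\lambda^2\Sigma_{22}\big).
\]
Since \(\sqrt n/\sqrt{n}\cdot\sqrt n\cdot n^{-1}\sqrt G\to(\kappa\Expect p_g)^{-1/2}\) and \(\frac{1}{\kappa\Expect p_g}=\frac{1+r}{r\kappa}\), Slutsky's lemma yields \(V_\lambda\).

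\textbf{Variance estimator.} Since \(\hat H\to H_{\theta^\star}\) as above and \(G/n\to\frac{1+r}{r\kappa}\), it remains to show \(\hat\Sigma\to\Sigma\). Write \(\hat\varepsilon_g=\varepsilon_g+\Delta_g\), where \(\|\Delta_g\|\le A_g\|\hat\theta-\theta^\star\|\), by the mean value theorem on \(U\). The analogous decomposition holds for \(\hat Z_g\), with extra terms \((r-\frac nN)(\cdot)\) and \((\hat\mu_f-\mu_f)(\cdot)\), each multiplied by \(O(m_g)\)-sized cluster sums. Then
\[
\frac1G\sum_g\hat\varepsilon_g\hat\varepsilon_g^\top-\frac1G\sum_g\varepsilon_g\varepsilon_g^\top
\]
is controlled by Cauchy--Schwarz using \(\frac1G\sum\|\varepsilon_g\|^2=O_p(1)\) and \(\frac1G\sum\|\Delta_g\|^2\to0\). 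The latter needs \(\frac1G\sum A_g^2=O_p(1)\), or at least \(\max_g A_g\|\hat\theta-\theta^\star\|\to0\). This is the delicate point: with only \(\Expect A_g<\infty\) I would use a truncation argument or a local Lipschitz bound on the gradients, or else strengthen the assumption to \(\Expect A_g^2<\infty\) if necessary. The analogous bounds for the \(Z\)-terms use \(\Expect m_g^2<\infty\). Finally, the continuous mapping theorem and positive definiteness of \(V_\lambda\) give the studentized limit.
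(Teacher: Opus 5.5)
Your proposal is correct and follows the paper's proof essentially step for step. Both arguments use consistency from convexity via Newey--McFadden Theorem~2.7, a Taylor expansion with a uniform LLN for the Hessian, the identity $\sum_{g,i}(\xi_{g,i}-\tfrac{n}{N}(1-\xi_{g,i}))=0$ to center at $\mu_f$ before replacing $n/N$ by $r$, and a cluster-level CLT plus Slutsky. The only difference is that you bound the leftover term by the LLN rather than the CLT, which still leaves an $o_p(1)$ error after scaling.

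The paper settles $\hat\Sigma\to\Sigma$ in one line. The point you flag there needs no extra assumption: $\|\hat\theta-\theta^\star\|^2=O_p(G^{-1})$ by the first part, and $\max_g A_g=o_p(G)$ when $\Expect A_g<\infty$, so $\|\hat\theta-\theta^\star\|^2\frac{1}{G}\sum_g A_g^2\le\|\hat\theta-\theta^\star\|^2\max_g A_g\cdot\frac{1}{G}\sum_g A_g=o_p(1)$.
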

Theorem~\ref{thrm:general-clt} is proved in Appendix~\ref{sec:proofs}. Cluster robust confidence intervals for \(\theta^\star_{j}\) are created by computing \(\hat\theta^\PPI_{\lambda}\) and \(\hat{V}_\lambda\) and returning
\[
    \mathcal{C}_\alpha^\PPI = \left([\hat\theta^\PPI_{\lambda}]_{j}\pm z_{1-\alpha/2}\sqrt{[\hat{V}_{\lambda}]_{j,j}/n}\right),
\]
where \([\hat\theta^\PPI_{\lambda}]_{j}\) denotes the \(j\)th entry of \(\hat\theta^\PPI_\lambda\) and \([\hat{V}_{\lambda}]_{j,j}\) denotes the \((j,j)\)th entry of \(\hat{V}_\lambda\).

Note that by taking \(\ell_\theta(x,y) = \frac{1}{2}(y-\theta)^2\), Theorem~\ref{thrm:general-clt} implies Theorem~\ref{thrm:cluster-clt} and Remark~\ref{rem:CI} from the main text.

\subsection{Power tuning}

We consider two different procedures for choosing \(\lambda\). The first minimizes the trace \(\Tr(V_\lambda)\) and the second minimizes \([V_\lambda]_{j,j}\) for a coordinate of interest \(j\).

For trace minimization we have

\begin{align*}
    \lambda^\star_{\Tr} & := \frac{\Tr\left(H_{\theta^\star}^{-1}\left(\Sigma_{1:d,(d+1):(2d)}+\Sigma_{(d+1):(2d),1:d}\right)H_{\theta^\star}^{-1}\right)}{2\Tr\left(H_{\theta^\star}^{-1}\Sigma_{(d+1):(2d),(d+1):(2d)}H_{\theta^\star}^{-1}\right)} \\
                        & =\frac{\Tr\left(H_{\theta^\star}^{-1}\Sigma_{1:d,(d+1):(2d)}H_{\theta^\star}^{-1}\right)}{\Tr\left(H_{\theta^\star}^{-1}\Sigma_{(d+1):(2d),(d+1):(2d)}H_{\theta^\star}^{-1}\right)},
\end{align*}
where the second line used the fact that \((H_{\theta^\star}^{-1}\Sigma_{1:d,(d+1):(2d)}H_{\theta^\star}^{-1})^\top = H_{\theta^\star}^{-1}\Sigma_{(d+1):(2d),1:d}H_{\theta^\star}^{-1}\) and thus these two matrices have the same trace. The optimal parameter \(\hat\lambda^\star_{\Tr}\) can be estimated via
\begin{align*}
    \hat\lambda_{\Tr} & := \frac{\Tr\left(\hat H^{-1}\hat\Sigma_{1:d,(d+1):(2d)}\hat H^{-1}\right)}{\Tr\left(\hat H^{-1}\hat\Sigma_{(d+1):(2d),(d+1):(2d)}\hat H^{-1}\right)},
\end{align*}
where \(\widehat{\Sigma}\) is first computed using \(\lambda=1\) as in \citet{angelopoulos_ppi_2024}.

For GLMs, the parameter \(\hat\lambda_{\Tr}\) is clipped to be between \(0\) and \(1\) to ensure that the loss function remains convex.

The value of \(\lambda\) that minimizes \([V_\lambda]_{j,j}\) is given by
\begin{align*}
    \lambda^\star_{j} & := \frac{\left[H_{\theta^\star}^{-1}\left(\Sigma_{1:d,(d+1):(2d)}+\Sigma_{(d+1):(2d),1:d}\right)H_{\theta^\star}^{-1}\right]_{j,j}}{2\left[H_{\theta^\star}^{-1}\Sigma_{(d+1):(2d),(d+1):(2d)}H_{\theta^\star}^{-1}\right]_{j,j}} \\
                      & =\frac{\left[H_{\theta^\star}^{-1}\Sigma_{1:d,(d+1):(2d)}H_{\theta^\star}^{-1}\right]_{j,j}}{\left[H_{\theta^\star}^{-1}\Sigma_{(d+1):(2d),(d+1):(2d)}H_{\theta^\star}^{-1}\right]_{j,j}},
\end{align*}
where the second line again uses \((H_{\theta^\star}^{-1}\Sigma_{1:d,(d+1):(2d)}H_{\theta^\star}^{-1})^\top = H_{\theta^\star}^{-1}\Sigma_{(d+1):(2d),1:d}H_{\theta^\star}^{-1}\) and thus these two matrices have the same diagonal entries. The optimal parameter \(\hat\lambda^\star_{j}\) can be estimated via
\begin{align*}
    \hat\lambda_{j} & := \frac{\left[\hat H^{-1}\hat\Sigma_{1:d,(d+1):(2d)}\hat H^{-1}\right]_{j,j}}{\left[\hat H^{-1}\hat\Sigma_{(d+1):(2d),(d+1):(2d)}\hat H^{-1}\right]_{j,j}},
\end{align*}
where \(\widehat{\Sigma}\) is first computed using \(\lambda=1\) as in \citet{angelopoulos_ppi_2024}.

For GLMs, the parameter \(\hat\lambda_{j}\) is clipped to be between \(0\) and \(1\) to ensure that the loss function remains convex.

The following theorem gives a central limit theorem for \(\hat{\theta}_{\hat\lambda}^{\PPI}\) where \(\hat\lambda\) is any estimator that converges in probability to a fixed \(\lambda\).

\begin{theorem}[Central limit theorem for power-tuned estimator]
    \label{thrm:power-tuned-clt}
    Suppose that the assumptions in Section~\ref{sec:appn-assumptions} hold and let \(V_\lambda\) and \(\hat{V}_{\lambda}\) be as in Theorem~\ref{thrm:general-clt}. If \(\hat\lambda\) converges to \(\lambda\) in probability and both \(\hat\lambda\) and \(\lambda\) are in \(\Lambda\), then
    \begin{align*}
        \sqrt{n}(\hat\theta^\PPI_{\hat\lambda}-\theta^\star) \stackrel{d}{\to} \mathcal{N}(0,V_\lambda),
    \end{align*}
    and
    \begin{align*}
        \sqrt{n}\hat{V}_{\hat\lambda}^{-1/2}\left(\hat\theta^\PPI_{\hat\lambda} - \theta^\star\right) \stackrel{d}{\to} \mathcal{N}(0,I_d),
    \end{align*}
    where \(I_d\) is the \(d \times d\) identity matrix.
\end{theorem}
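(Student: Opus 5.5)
The plan is to reduce Theorem~\ref{thrm:power-tuned-clt} to the fixed-\(\lambda\) analysis behind Theorem~\ref{thrm:general-clt}. The argument runs that analysis uniformly over \(\lambda\) in a neighborhood, rather than re-deriving the limit from scratch. Write the empirical objective as
\[
L_\lambda(\theta)=L_n(\theta)-\lambda D(\theta), \qquad D(\theta)=L_n^f(\theta)-\widetilde L_N^f(\theta).
\]
Two facts from the proof of Theorem~\ref{thrm:general-clt} will be reused. First, by the cluster-level i.i.d.\ structure (Assumptions~\ref{as:iid} and~\ref{as:labels}) and \(n/G\to \kappa\Expect[p_g]\), \(N/G\to\kappa(1-\Expect[p_g])\), the gradient pieces satisfy
\[
\sqrt{n}\,\nabla L_n(\theta^\star)=\frac{1}{\sqrt n}\sum_g\varepsilon_g
\]
and \(\sqrt n\,\nabla D(\theta^\star)=n^{-1/2}\sum_g Z_g+o_p(1)\), jointly asymptotically normal with covariance \(\frac{1+r}{r\kappa}\Sigma\). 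The \(o_p(1)\) comes from replacing \(n/N\) by \(r\), and the centering \(\mu_f\) cancels between the labeled and unlabeled parts. Second, the Hessians converge uniformly on \(U\) by a uniform law of large numbers, using the envelope \(A_g\) with \(\Expect[A_g]<\infty\). These give \(\nabla^2 L_n\to H_{\theta^\star}\) and \(\nabla^2 D\to 0\) uniformly on a neighborhood of \(\theta^\star\). The latter holds because the labeled and unlabeled prediction Hessians have the same population limit, since \(p_g\) is independent of the data.

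The key step is consistency of \(\hat\theta^\PPI_{\hat\lambda}\), which is where convexity (Assumption~\ref{as:convex}) and \(\hat\lambda,\lambda\in\Lambda\) enter. Since \(D(\theta)\to 0\) pointwise, \(L_{\hat\lambda}(\theta)=L_n(\theta)-\hat\lambda D(\theta)\to L(\theta)\) pointwise in probability, using that \(\hat\lambda\) is bounded in probability. For convex functions, pointwise convergence upgrades to uniform convergence on compacts. The convex-argmin argument (as in \citet{newey1994large}) combined with uniqueness in Assumption~\ref{as:unique} then yields \(\hat\theta^\PPI_{\hat\lambda}\to\theta^\star\) in probability. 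With high probability this places the minimizer in the interior of \(U\), so the first-order condition \(\nabla L_{\hat\lambda}(\hat\theta^\PPI_{\hat\lambda})=0\) holds.

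Next, take a mean-value expansion around \(\theta^\star\). This gives
\[
0=\nabla L_n(\theta^\star)-\hat\lambda\nabla D(\theta^\star)+\big(\nabla^2L_n(\tilde\theta)-\hat\lambda\nabla^2D(\tilde\theta)\big)(\hat\theta^\PPI_{\hat\lambda}-\theta^\star),
\]
where the expansion is taken row by row. By the uniform Hessian convergence and consistency, the bracketed matrix tends to \(H_{\theta^\star}\) in probability. Therefore
\[
\sqrt n(\hat\theta^\PPI_{\hat\lambda}-\theta^\star)=-H_{\theta^\star}^{-1}\big(\sqrt n\nabla L_n(\theta^\star)-\lambda\sqrt n\nabla D(\theta^\star)\big)-H_{\theta^\star}^{-1}(\hat\lambda-\lambda)\sqrt n\nabla D(\theta^\star)+o_p(1).
\]
The middle remainder is \(o_p(1)\cdot O_p(1)\), because \(\sqrt n\nabla D(\theta^\star)\) is tight. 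This is exactly where only convergence \(\hat\lambda\to\lambda\), with no rate, suffices. The leading term is the same linear combination as in Theorem~\ref{thrm:general-clt}, so the continuous mapping theorem and Slutsky give the limit \(\mathcal N(0,V_\lambda)\).

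For the studentized statement, I show \(\hat V_{\hat\lambda}\to V_\lambda\) in probability. \(\hat\Sigma\) and \(\hat H\) are evaluated at \(\hat\theta^\PPI_{\hat\lambda}\), which is consistent. \(\hat H\to H_{\theta^\star}\) follows from the uniform law of large numbers. For \(\hat\Sigma\), expand \(\hat\varepsilon_g\) and \(\hat Z_g\) around their population versions. The differences are bounded by \(A_g\|\hat\theta-\theta^\star\|\), by \(|n/N-r|\sum_i\|\nabla\ell\|\), and by \(\|\hat\mu_f-\mu_f\|m_g\). Cross terms are then controlled by Cauchy--Schwarz, using the finite second moments of the gradient sums and of \(m_g\).

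The \(\lambda\)-dependence of \(\hat V\) is an explicit quadratic polynomial in \(\lambda\) with coefficients that converge, so substituting \(\hat\lambda\) is harmless. Positive definiteness of \(V_\lambda\) makes \(\hat V_{\hat\lambda}^{-1/2}\) well defined with probability tending to one, and continuous, so Slutsky finishes. I expect the main obstacle to be the \(\hat\Sigma\) consistency step. Controlling products of perturbed cluster sums with only second moments, rather than a bounded \(A_g^2\), requires care. I would first try a truncation or a uniform law of large numbers over a shrinking neighborhood, applied to \(\sup_{\theta\in U}\) of the squared gradient sums.
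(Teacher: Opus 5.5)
Your proposal is correct and follows essentially the paper's route. The paper's only new ingredient is the lemma showing \(\nabla_\theta L^\PPI_{G,\hat\lambda}(\theta^\star)-\nabla_\theta L^\PPI_{G,\lambda}(\theta^\star)=-(\hat\lambda-\lambda)\nabla D(\theta^\star)=o_p(G^{-1/2})\), which is exactly your \(o_p(1)\cdot O_p(1)\) remainder, and it then reuses the consistency, Taylor and Hessian-ULLN steps of Theorem~\ref{thrm:general-clt}, which you spell out more carefully (including consistency of \(\hat\theta^\PPI_{\hat\lambda}\) under a random \(\hat\lambda\)). The \(\hat\Sigma\) step you flag closes without truncation, a step the paper itself only asserts from the moment assumptions: \(\Expect[A_g]<\infty\) gives \(\max_g A_g=o_p(G)\), so by the \(\sqrt{n}\)-rate you have already established, \(\frac1G\sum_g A_g^2\Vert\hat\theta-\theta^\star\Vert^2\le \max_g A_g\cdot\frac1G\sum_g A_g\cdot O_p(G^{-1})=o_p(1)\).
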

Thus, asymptotically valid confidence intervals can be based on \(\hat\theta_{\hat\lambda}\) and \(\hat{V}_{\hat\lambda}\). As before, Theorem~\ref{thrm:power-tuned-clt} implies equation \eqref{eq:power-tuned-clt} which states a central limit theorem for the power tuned mean estimator.

\subsection{The PPI-correlation}

Let \(\sigma^2_{\lambda}\) be the asymptotic variance of \([\hat\theta^\PPI_{\lambda}]_{j}\). If \(\lambda^\star = \lambda_j^\star \in [0,1]\), then we have
\begin{align*}
    \sigma^2_{\lambda^\star}  = [V_{\lambda^\star}]_{j,j}
     & =\frac{1+r}{r\kappa}\left(\left[H^{-1}_{\theta^\star}\Sigma_{1:d,1:d}H^{-1}_{\theta^\star}\right]_{j,j} - \frac{\left[H^{-1}_{\theta^\star}\Sigma_{1:d,(d+1):(2d)}H^{-1}_{\theta^\star}\right]_{j,j}^2}{\left[H^{-1}_{\theta^\star}\Sigma_{(d+1):(2d),(d+1):(2d)}H^{-1}_{\theta^\star}\right]_{j,j}} \right)                                                                                          \\
     & =\frac{1+r}{r\kappa}\left[H^{-1}_{\theta^\star}\Sigma_{1:d,1:d}H^{-1}_{\theta^\star}\right]_{j,j}\left(1- \frac{\left[H^{-1}_{\theta^\star}\Sigma_{1:d,(d+1):(2d)}H^{-1}_{\theta^\star}\right]_{j,j}^2}{\left[H^{-1}_{\theta^\star}\Sigma_{1:d,1:d}H^{-1}_{\theta^\star}\right]_{j,j}\left[H^{-1}_{\theta^\star}\Sigma_{(d+1):(2d),(d+1):(2d)}H^{-1}_{\theta^\star}\right]_{j,j}}\right) \\
     & =\sigma^2_0(1-\rho^2),
\end{align*}
where \(\sigma^2_0\) is the asymptotic variance of the labels-only estimator \(\hat\theta^\PPI_0\) and
\[
    \rho =\frac{\left[H^{-1}_{\theta^\star}\Sigma_{1:d,(d+1):2d}H^{-1}_{\theta^\star}\right]_{j,j}}{\sqrt{\left[H^{-1}_{\theta^\star}\Sigma_{1:d,1:d}H^{-1}_{\theta^\star}\right]_{j,j}\left[H^{-1}_{\theta^\star}\Sigma_{(d+1):(2d),(d+1):(2d)}H^{-1}_{\theta^\star}\right]_{j,j}}}.
\]
See Section~\ref{sec:appn-var} for more explicit formulas for \(\rho\) in the case of mean estimation.

\section{Proofs}\label{sec:proofs}

\subsection{Proof of Theorem~\ref{thrm:general-clt}}

Through-out, let \(\hat\theta\) denote \(\hat\theta_\lambda^\PPI\) and 
\[\Sigma_\lambda = \cov(\varepsilon_g - \lambda Z_g) = \Sigma_{1:d,1:d}-\lambda(\Sigma_{1:d,(d+1):2d}+\Sigma_{(d+1):2d,1:d})+\lambda^2 \Sigma_{(d+1):2d, (d+1):2d}.\]
Also let 
\begin{align*}
    L^\PPI_G(\theta) = \frac{1}{n}\sum_{g=1}^G \sum_{i=1}^{m_g} \xi_{g,i}\ell_\theta(X_{g,i}, Y_{g,i}) - \lambda\left(\xi_{g,i} - \frac{n}{N}(1-\xi_{g,i})\right)\ell_\theta(X_{g,i},f(X_{g,i})),
\end{align*}
and 
\begin{align*}
    L(\theta) = \frac{1}{\kappa}\Expect\left[\sum_{i=1}^{m_g}\ell_\theta(X_{g,i},Y_{g,i})\right].
\end{align*}

Theorem~2.7 of \citet{newey1994large} and the convexity of \(L^\PPI_G\) imply that the PPI estimator is consistent.
\begin{proposition}
    The estimate \(\hat\theta\) is consistent for \(\theta^\star\). That is \(\hat\theta \stackrel{p}{\to}\theta^\star\) as \(G \to \infty\).
\end{proposition}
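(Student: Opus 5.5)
Following the remark immediately preceding the proposition, I will invoke Theorem~2.7 of \citet{newey1994large}, the consistency result for extremum estimators with convex objectives. It requires three conditions. First, \(\theta^\star\) lies in the interior of a convex set and is the unique minimizer of a limit function \(Q_0\). Second, the sample objective is convex with probability approaching one. Third, the sample objective converges \emph{pointwise} in probability to \(Q_0\); convexity then upgrades this to uniform convergence on compacts.

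For the limit function, I take \(Q_0(\theta)=L(\theta)+c(\theta)\). Here \(c(\theta)\) is whatever limit the prediction term \(-\lambda(L^f_n-\widetilde L^f_N)\) has, and I will show this limit is zero. Uniqueness of \(\theta^\star\) is then exactly Assumption~\ref{as:unique}. Convexity of \(L^\PPI_G\) for \(\lambda\in\Lambda\) is Assumption~\ref{as:convex}. The neighbourhood \(U\) of Assumption~\ref{as:reg} supplies the interior condition.

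The main work is pointwise convergence, which I approach by rewriting everything as ratios of i.i.d.\ cluster averages. Fix \(\theta\). Define the cluster sums
\[
S_g=\sum_i\xi_{g,i}\ell_\theta(X_{g,i},Y_{g,i}),\quad S^f_g=\sum_i\xi_{g,i}\ell_\theta(X_{g,i},f(X_{g,i})),\quad T^f_g=\sum_i(1-\xi_{g,i})\ell_\theta(X_{g,i},f(X_{g,i})),
\]
and the counts \(n_g=\sum_i\xi_{g,i}\) and \(N_g=m_g-n_g\). Then
\[
L^\PPI_G(\theta)=\frac{\bar S}{\bar n}-\lambda\Big(\frac{\bar S^f}{\bar n}-\frac{\bar T^f}{\bar N}\Big),
\]
with bars denoting averages over \(g\). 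The clusters are i.i.d.\ (Assumptions~\ref{as:iid} and~\ref{as:labels}), so I apply the weak law of large numbers to each bar. This needs \(\Expect|S_g|<\infty\) and similar bounds for the other sums. I plan to get these from the finite-variance and moment conditions in Assumption~\ref{as:reg}, possibly after recentering \(\ell_\theta\) at its value at \(\theta^\star\). Recentering is harmless because subtracting a \(\theta\)-independent quantity does not change the argmin.

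Next I compute the limits. Since \(p_g\) and \(\xi\) are independent of the cluster data and of \(m_g\), I have \(\Expect[\xi_{g,i}\mid \text{data},m_g]=\Expect[p_g]=:\mu_p\). Hence
\[
\Expect S_g=\mu_p\kappa L(\theta),\qquad \Expect n_g=\mu_p\kappa,\qquad \Expect N_g=(1-\mu_p)\kappa .
\]
For the same reason, \(\Expect S^f_g/\Expect n_g\) and \(\Expect T^f_g/\Expect N_g\) both equal \(\kappa^{-1}\Expect\sum_i\ell_\theta(X_{g,i},f(X_{g,i}))\), so they cancel. Because \(0<\mu_p<1\), the denominators are positive, and the continuous mapping theorem gives \(L^\PPI_G(\theta)\stackrel{p}{\to}L(\theta)\). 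This shows \(c\equiv 0\).

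The hardest part is bookkeeping rather than any deep step. The assumptions are stated on \(U\), not on all of \(\reals^d\), so I plan to apply Newey--McFadden on the convex set \(U\). Convexity ensures that a minimizer over \(U\) lying in the interior is a global minimizer. I will use the uniform convergence on \(U\) together with the strict separation \(\inf_{\partial U}L>L(\theta^\star)\), which follows from uniqueness and compactness, to show that with probability approaching one the minimizer of \(L^\PPI_G\) lies in the interior of \(U\). Consistency then follows from Theorem~2.7.
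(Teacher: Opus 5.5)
Your proposal is correct and follows the paper's proof. Both invoke Theorem~2.7 of \citet{newey1994large} using three ingredients: uniqueness of \(\theta^\star\) from Assumption~\ref{as:unique}, convexity of \(L^\PPI_G\) from \(\lambda\in\Lambda\), and pointwise convergence \(L^\PPI_G(\theta)\stackrel{p}{\to}L(\theta)\) on \(U\), which follows from the labels being independent of the cluster data. Your ratio-of-cluster-means computation and the recentering of the loss just spell out that last step, which the paper only asserts. One small fix: convexity upgrades pointwise convergence to uniform convergence only on compact subsets of the interior of \(U\), so run your separation argument on the boundary of a closed ball around \(\theta^\star\) inside \(U\) rather than on \(\partial U\). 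Alternatively, obtain uniformity on all of \(U\) directly from the gradient bound at \(\theta^\star\) and \(\Expect[A_g]<\infty\).
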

\begin{proof}
    To apply Theorem~2.7 of \citet{newey1994large}, note that we have assumed that \(\theta^\star\) is the unique minimizer of \(L(\theta)\). Furthermore, the independence between \((\xi_{g,i})_{i=1}^{m_g}\) and \((X_{g,i},Y_{g,i})_{i=1}^{m_g}\) implies that 
    \[ 
        L^\PPI_G(\theta) \stackrel{p}{\to} L(\theta)
    \]
    for all \(\theta \in U\) where \(U\) is the compact convex set containing a neighborhood of \(\theta^\star\) from Assumption~\ref{as:reg}. Finally, the assumption that \(\lambda \in \Lambda\) implies that \(L^\PPI_G\) is convex in \(\theta\) as required.
\end{proof}

The next lemma gives a uniform law of large numbers for the Hessian \(\nabla^2_\theta L^\PPI_G\).

\begin{lemma}\label{lem:ULLN}
   Let \(U\) be the set from Assumption~\ref{as:reg}, then 
   \[
    \sup_{\theta \in U} \left\Vert \nabla^2_\theta L^\PPI_G(\theta) - \nabla^2_\theta L(\theta)\right\Vert \stackrel{p}{\to} 0.
   \]
\end{lemma}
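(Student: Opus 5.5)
The plan is to write \(\nabla^2_\theta L^\PPI_G(\theta)\) as ratios of cluster-level averages and apply a uniform law of large numbers to each average. Define the cluster-level random functions
\begin{align*}
    a_g(\theta) &= \sum_{i=1}^{m_g}\xi_{g,i}\nabla^2_\theta\ell_\theta(X_{g,i},Y_{g,i}), &
    b_g(\theta) &= \sum_{i=1}^{m_g}\xi_{g,i}\nabla^2_\theta\ell_\theta(X_{g,i},f(X_{g,i})),\\
    c_g(\theta) &= \sum_{i=1}^{m_g}(1-\xi_{g,i})\nabla^2_\theta\ell_\theta(X_{g,i},f(X_{g,i})). & &
\end{align*}
In this notation
\[
\nabla^2_\theta L^\PPI_G(\theta) = \frac{G}{n}\cdot\frac{1}{G}\sum_g a_g(\theta) - \lambda\frac{G}{n}\cdot\frac{1}{G}\sum_g b_g(\theta) + \lambda\frac{G}{N}\cdot\frac{1}{G}\sum_g c_g(\theta).
\]
The tuples \((a_g,b_g,c_g)\) are i.i.d. across clusters by Assumptions~\ref{as:iid} and~\ref{as:labels}. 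Since \(\xi_{g,i}\) is independent of the cluster data, conditioning on \(p_g\) and the data gives
\[
\Expect[a_g(\theta)]=\Expect[p_g]\,\Expect\Big[\sum_i\nabla^2_\theta\ell_\theta(X_{g,i},Y_{g,i})\Big]=\Expect[p_g]\,\kappa\,\nabla^2_\theta L(\theta).
\]
Here the interchange of derivative and expectation is justified by dominated convergence with \(\Expect[A_g]<\infty\). In the same way, \(\Expect[b_g]=\Expect[p_g]\,\kappa\,M(\theta)\) and \(\Expect[c_g]=(1-\Expect[p_g])\,\kappa\,M(\theta)\), where \(M(\theta)=\kappa^{-1}\Expect[\sum_i\nabla^2_\theta\ell_\theta(X_{g,i},f(X_{g,i}))]\).

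\textbf{Step 1 (scalar LLN).} Since \(\Expect[m_g]<\infty\), the ordinary law of large numbers gives \(n/G\to\Expect[p_g]\kappa\) and \(N/G\to(1-\Expect[p_g])\kappa\) almost surely. Both limits are strictly positive because \(0<\Expect[p_g]<1\) and \(\kappa\ge 1\).

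\textbf{Step 2 (uniform LLN).} I will apply the standard uniform law of large numbers (Lemma~2.4 of \citet{newey1994large}) to each of \(a_g,b_g,c_g\) on the compact set \(U\). Each function is continuous in \(\theta\) on \(U\) by Assumption~\ref{as:reg}. Each is dominated by \(A_g\), since \(\xi_{g,i}\in\{0,1\}\) and \(\sup_U\Vert\cdot\Vert\) of a sum is at most the sum of the sups, and \(\Expect[A_g]<\infty\). The lemma therefore yields \(\sup_{\theta\in U}\Vert G^{-1}\sum_g a_g(\theta)-\Expect[a_g(\theta)]\Vert\to0\) in probability, and likewise for \(b_g\) and \(c_g\). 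Continuity of the limits \(\Expect[a_g(\theta)]\) etc. also follows from dominated convergence.

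\textbf{Step 3 (combine).} The scalar factors \(G/n\) and \(G/N\) converge to deterministic constants, and the averages converge uniformly to bounded limits, bounded because \(\sup_U\Vert\Expect a_g\Vert\le\Expect A_g\). A triangle-inequality decomposition, for example
\[
\Big\Vert \tfrac{G}{n}\bar a - \tfrac{1}{\Expect[p_g]\kappa}\Expect a\Big\Vert \le \Big|\tfrac{G}{n}\Big|\Vert\bar a-\Expect a\Vert + \Big|\tfrac{G}{n}-\tfrac{1}{\Expect[p_g]\kappa}\Big|\,\Vert\Expect a\Vert,
\]
shows uniform convergence of each term. The limit is then
\[
\nabla^2_\theta L(\theta)-\lambda M(\theta)+\lambda M(\theta)=\nabla^2_\theta L(\theta),
\]
because the prediction terms cancel exactly. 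This proves the lemma.

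The main obstacle is Step 2. The summands are dependent within clusters and the number of terms \(m_g\) is random, so the unit-level ULLN cannot be used directly. This is handled by working at the cluster level, where i.i.d.-ness holds, and by checking that the envelope \(A_g\) dominates all three functions despite the random labeling. A smaller point is that \(n/N\) appears inside the sum. Factoring out \(1/n\) and \(1/N\) separately, as above, avoids treating it as a random weight.
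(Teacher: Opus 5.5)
Your proposal is correct and follows essentially the same route as the paper. You split $\nabla^2_\theta L^\PPI_G$ into the three cluster-level Hessian sums, apply Lemma~2.4 of \citet{newey1994large} to each with envelope $A_g$, use $n/G \to r\kappa/(1+r)$ and $N/G \to \kappa/(1+r)$, and combine by the triangle inequality so that the two prediction terms cancel in the limit; you are also slightly more explicit than the paper about interchanging differentiation and expectation and about the boundedness of the limits in the product step.
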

\begin{proof}
    By independence of \((\xi_{g,i})_{i=1}^{m_g}\) and \((X_{g,i},Y_{g,i})_{i=1}^{m_g}\), it follows that
    \begin{align*}
        \Expect\left[\frac{1+r}{r\kappa}\nabla^2_\theta\sum_{i=1}^{m_g}\xi_{i,g} \ell_\theta(X_{g,i},Y_{g,i})\right]& = \frac{(1+r)\Expect[p_g]}{r\kappa}\Expect\left[\nabla^2_\theta\sum_{i=1}^{m_g}\ell_\theta(X_{g,i},Y_{g,i})\right]\\
        &=\nabla^2_\theta L(\theta).
    \end{align*}
    And likewise
    \begin{align*}
        \Expect\left[\frac{1+r}{r\kappa}\nabla^2_\theta\sum_{i=1}^{m_g}\xi_{i,g}\ell_\theta(X_{g,i},f(X_{g,i})) \right]&=\nabla^2_\theta L^f(\theta), \text{ and}\\
        \Expect\left[\frac{1+r}{\kappa}\nabla^2_\theta\sum_{i=1}^{m_g}(1-\xi_{i,g})\ell_\theta(X_{g,i},f(X_{g,i})) \right]&=\nabla^2_\theta L^f(\theta),
    \end{align*}
    where
    \[
        L^f(\theta) = \frac{1}{\kappa}\Expect\left[\sum_{i=1}^{m_g}\ell_\theta(X_{g,i},f(X_{g,i}))\right].
    \]
    By Assumption~\ref{as:reg} and Lemma~2.4 of \citet{newey1994large},
    \begin{align*}
        \sup_{\theta \in U}\left\Vert \frac{1+r}{r\kappa G}\nabla_\theta^2\sum_{g=1}^G\sum_{i=1}^{m_g}\xi_{g,i}\ell_\theta(X_{g,i},Y_{g,i}) - \nabla^2_\theta L(\theta)\right\Vert \stackrel{p}{\to} 0,
    \end{align*}
    and
    \begin{align*}
        \sup_{\theta \in U}\left\Vert \frac{1+r}{r\kappa G}\nabla_\theta^2\sum_{g=1}^G\sum_{i=1}^{m_g}\xi_{g,i}\ell_\theta(X_{g,i},f(X_{g,i})) - \frac{1+r}{\kappa G}\nabla_\theta^2\sum_{g=1}^G\sum_{i=1}^{m_g}(1-\xi_{g,i})\ell_\theta(X_{g,i},f(X_{g,i}))\right\Vert \stackrel{p}{\to} 0.
    \end{align*}
    Furthermore, we have that \(\frac{n}{G} \stackrel{p}{\to} \frac{r \kappa}{1+r}\) and \(\frac{N}{G} \stackrel{p}{\to} \frac{\kappa}{1+r}\). This, together with Assumption~\ref{as:reg}, implies that
    \begin{align*}
        \sup_{\theta \in U}\left\Vert\frac{1}{n} \nabla_\theta^2\sum_{g=1}^G\sum_{i=1}^{m_g}\xi_{g,i}\ell_\theta(X_{g,i},Y_{g,i}) - \frac{1+r}{r\kappa G}\sum_{g=1}^G\sum_{i=1}^{m_g}\xi_{g,i}\nabla_\theta^2\ell_\theta(X_{g,i},Y_{g,i}) \right\Vert \stackrel{p}{\to} 0.
    \end{align*}
    And likewise 
    \begin{align*}
        \sup_{\theta \in U}\left\Vert \frac{1}{n}\nabla_\theta^2\sum_{g=1}^G\sum_{i=1}^{m_g}\xi_{g,i}\ell_\theta(X_{g,i},f(X_{g,i})) -  \frac{1+r}{r\kappa G}\nabla_\theta^2\sum_{g=1}^G\sum_{i=1}^{m_g}\xi_{g,i}\ell_\theta(X_{g,i},f(X_{g,i}))\right\Vert &\stackrel{p}{\to} 0, \text{ and}\\
        \sup_{\theta \in U}\left\Vert \frac{1}{N}\nabla_\theta^2\sum_{g=1}^G\sum_{i=1}^{m_g}(1-\xi_{g,i})\ell_\theta(X_{g,i},f(X_{g,i})) -  \frac{1+r}{\kappa G}\nabla_\theta^2\sum_{g=1}^G\sum_{i=1}^{m_g}(1-\xi_{g,i})\ell_\theta(X_{g,i},f(X_{g,i}))\right\Vert &\stackrel{p}{\to} 0.
    \end{align*}
    The triangle inequality then gives
    \begin{align*}
         \sup_{\theta \in U} \left\Vert \nabla^2_\theta L^\PPI_G(\theta) - \nabla^2_\theta L(\theta)\right\Vert \stackrel{p}{\to} 0,
    \end{align*}
    as required.
\end{proof}
The next lemma is a simple Taylor's approximation for the gradient \(\nabla_\theta L_G^\PPI(\theta^\star)\). 
\begin{lemma}\label{lem:taylor}
    If \(\hat\theta \in U\) and
    \(\bar{H} = \int_0^1 \nabla^2_\theta L_G^\PPI(\hat\theta + t(\theta^\star - \hat\theta))dt\)    then
    \begin{align*}
        \nabla_\theta  L^\PPI_G(\theta^\star) = \bar{H}(\theta^\star - \hat\theta)
    \end{align*}
    and \(\Vert \bar{H}-H\Vert \stackrel{p}{\to} 0\).
\end{lemma}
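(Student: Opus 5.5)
The identity follows from the fundamental theorem of calculus applied along the segment from \(\hat\theta\) to \(\theta^\star\), combined with the first-order condition at \(\hat\theta\). The convergence \(\bar H\to H_{\theta^\star}\) follows from Lemma~\ref{lem:ULLN}, the consistency proposition, and continuity of \(\nabla^2_\theta L\).

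\textbf{Step 1 (first-order condition).} The set \(U\) from Assumption~\ref{as:reg} is convex, so on the event \(\hat\theta\in U\) the whole segment \(\{\hat\theta+t(\theta^\star-\hat\theta):t\in[0,1]\}\) lies in \(U\). On \(U\), \(L^\PPI_G\) is twice continuously differentiable, because it is a finite combination of the \(\ell_\theta(X_{g,i},Y_{g,i})\) and \(\ell_\theta(X_{g,i},f(X_{g,i}))\).

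Next I would argue that \(\nabla_\theta L^\PPI_G(\hat\theta)=0\). The estimator \(\hat\theta\) minimizes \(L^\PPI_G\), which is convex since \(\lambda\in\Lambda\). \(U\) contains a neighborhood of \(\theta^\star\), and \(\hat\theta\stackrel{p}{\to}\theta^\star\) by the consistency proposition. So with probability tending to one, \(\hat\theta\) lies in the interior of \(U\), where \(L^\PPI_G\) is differentiable, and the first-order condition holds there. The hypothesis "\(\hat\theta\in U\)" should be read as this high-probability event (interior of \(U\)); this is enough for all later asymptotic uses.

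\textbf{Step 2 (the identity).} Define \(\phi(t)=\nabla_\theta L^\PPI_G(\hat\theta+t(\theta^\star-\hat\theta))\). It is continuously differentiable on \([0,1]\) with \(\phi'(t)=\nabla^2_\theta L^\PPI_G(\hat\theta+t(\theta^\star-\hat\theta))(\theta^\star-\hat\theta)\). The fundamental theorem of calculus, applied componentwise, gives
\[
\nabla_\theta L^\PPI_G(\theta^\star)-\nabla_\theta L^\PPI_G(\hat\theta)=\bar H(\theta^\star-\hat\theta).
\]
Since the second gradient on the left is zero by Step 1, the identity follows.

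\textbf{Step 3 (\(\bar H\to H_{\theta^\star}\)).} Here \(H=H_{\theta^\star}=\nabla^2_\theta L(\theta^\star)\). Writing \(\theta_t=\hat\theta+t(\theta^\star-\hat\theta)\), I would bound
\[
\Vert\bar H-H\Vert\le\sup_{\theta\in U}\Vert\nabla^2_\theta L^\PPI_G(\theta)-\nabla^2_\theta L(\theta)\Vert+\sup_{t\in[0,1]}\Vert\nabla^2_\theta L(\theta_t)-\nabla^2_\theta L(\theta^\star)\Vert .
\]
The first term is \(o_p(1)\) by Lemma~\ref{lem:ULLN}. For the second term, \(\nabla^2_\theta L\) is continuous on \(U\): this follows by dominated convergence, since the integrable envelope \(\Expect[A_g]<\infty\) allows differentiating under the expectation, and the Hessians are continuous in \(\theta\). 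Continuity on the compact set \(U\) implies uniform continuity. Since \(\sup_t\Vert\theta_t-\theta^\star\Vert\le\Vert\hat\theta-\theta^\star\Vert\stackrel{p}{\to}0\), the second term is also \(o_p(1)\).

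\textbf{Main obstacle.} The main care point is justifying continuity of \(\nabla^2_\theta L\) from the envelope condition, and the interior/first-order-condition argument. Both are handled on an event whose probability tends to one, which is all that is needed for convergence in probability.
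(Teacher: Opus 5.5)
Your proposal is correct and follows the same route as the paper: the fundamental theorem of calculus along the segment from \(\hat\theta\) to \(\theta^\star\), using \(\nabla_\theta L^\PPI_G(\hat\theta)=0\), and then the uniform law of large numbers from Lemma~\ref{lem:ULLN}. You are more careful than the paper's proof, which cites only that uniform law. The paper leaves implicit two steps you supply. First, continuity of \(\nabla^2_\theta L\) at \(\theta^\star\) together with consistency of \(\hat\theta\), which is needed to replace \(\int_0^1\nabla^2_\theta L(\hat\theta+t(\theta^\star-\hat\theta))\,dt\) by \(H_{\theta^\star}\). Second, the interior-point justification of the first-order condition.
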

\begin{proof}
    Define \(h : [0,1] \to \reals^d\) by \(h(t) = \nabla_\theta L^\PPI_G(\hat\theta + t(\theta^\star - \hat\theta))\). The function \(h\) satisfies \(h(0)=0\), \(h(1) = \nabla_\theta L_G^\PPI(\theta^\star)\) and
    \[
        h'(t) = \nabla_\theta^2 L^\PPI_G(\hat\theta + t(\theta^\star - \hat\theta))(\theta^\star - \hat\theta).
    \]
    The fundamental theorem of calculus gives
    \begin{align*}
        \nabla_\theta L^\PPI_G(\theta^\star)&=h(0)+\int_0^1 h'(t)dt\\
        &=\bar{H}(\theta^\star - \hat \theta). 
    \end{align*}
    Furthermore, if \(\hat\theta \in U\), then the line segment connecting \(\hat\theta\) and \(\theta^\star\) is also in \(U\) and thus \(\bar{H} \to H_{\theta^\star}\) by the uniform law of large numbers in Lemma~\ref{lem:ULLN}.
\end{proof}
Lemma~\ref{lem:gradient} relates the gradient \(\nabla_\theta L_G^\PPI(\theta^\star)\) to \(\varepsilon_g\) and \(Z_g\).
\begin{lemma}\label{lem:gradient}
    The gradient \(\nabla_\theta L_G^\PPI(\theta^\star)\) satisfies
    \begin{align*}
        \sqrt{n}\nabla_\theta L_G^\PPI(\theta^\star)&=\frac{1}{\sqrt{n}}\sum_{g=1}^G \left(\varepsilon_g - \lambda Z_g\right) + O_p(G^{-1/2}).
    \end{align*}
\end{lemma}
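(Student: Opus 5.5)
We write out the gradient at \(\theta^\star\) explicitly and compare it with the cluster sums. By definition of \(L^\PPI_G\),
\[
    n\nabla_\theta L^\PPI_G(\theta^\star) = \sum_{g=1}^G\sum_{i=1}^{m_g}\xi_{g,i}\nabla_\theta\ell_{\theta^\star}(X_{g,i},Y_{g,i}) - \lambda\sum_{g=1}^G\sum_{i=1}^{m_g}\Bigl(\xi_{g,i}-\tfrac{n}{N}(1-\xi_{g,i})\Bigr)\nabla_\theta\ell_{\theta^\star}(X_{g,i},f(X_{g,i})).
\]
The first double sum is exactly \(\sum_g\varepsilon_g\). For the second, write \(w_{g,i}=\nabla_\theta\ell_{\theta^\star}(X_{g,i},f(X_{g,i}))\). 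The key algebraic observation is
\[
    \sum_{g,i}\bigl(\xi_{g,i}-\tfrac{n}{N}(1-\xi_{g,i})\bigr)=n-\tfrac{n}{N}N=0 .
\]
Hence we may subtract \(\mu_f\) from \(w_{g,i}\) for free. Splitting \(\frac{n}{N}=r+(\frac{n}{N}-r)\) then gives
\[
    \sum_{g,i}\bigl(\xi_{g,i}-\tfrac{n}{N}(1-\xi_{g,i})\bigr)w_{g,i} = \sum_g Z_g - \Bigl(\tfrac{n}{N}-r\Bigr)\sum_{g}\sum_{i}(1-\xi_{g,i})(w_{g,i}-\mu_f).
\]
Consequently
\[
    \sqrt{n}\nabla_\theta L^\PPI_G(\theta^\star)=\frac{1}{\sqrt n}\sum_g(\varepsilon_g-\lambda Z_g)+\frac{\lambda}{\sqrt n}\Bigl(\tfrac{n}{N}-r\Bigr)R_G,
\]
where \(R_G=\sum_g\sum_i(1-\xi_{g,i})(w_{g,i}-\mu_f)\).

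It remains to bound the remainder. Each step uses the i.i.d.-cluster structure and the independence of labels from the data (Assumptions~\ref{as:iid} and \ref{as:labels}).

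\textbf{The ratio \(n/N\).} The quantities \(n=\sum_g\sum_i\xi_{g,i}\) and \(N\) are sums of i.i.d. cluster-level variables with means \(\kappa\Expect[p_g]\) and \(\kappa(1-\Expect[p_g])\). These variables have finite variance because \(\Expect[m_g^2]<\infty\). By the CLT, \(n/G-\kappa\Expect[p_g]=O_p(G^{-1/2})\), and similarly for \(N/G\). Since \(\Expect[p_g]\) is bounded away from \(0\) and \(1\), the delta method gives \(\frac{n}{N}-r=O_p(G^{-1/2})\).

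\textbf{The sum \(R_G\).} The summands of \(R_G\) are i.i.d. across clusters. Their mean is \((1-\Expect[p_g])\Expect[\sum_i(w_{g,i}-\mu_f)]=0\) by independence of the \(\xi\)'s from the data and the definition of \(\mu_f\). Their variance is finite by Assumption~\ref{as:reg} together with \(\Expect[m_g^2]<\infty\) (the \(\mu_f m_g\) term is controlled by the latter). Hence \(R_G=O_p(G^{1/2})\).

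\textbf{Combining.} Since \(n\asymp G\) in probability, the remainder is
\[
    \frac{1}{\sqrt n}\,O_p(G^{-1/2})\,O_p(G^{1/2})=O_p(G^{-1/2}),
\]
which is the claimed bound.

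The main obstacle is the centering trick: one has to notice that the weights \(\xi_{g,i}-\frac{n}{N}(1-\xi_{g,i})\) sum to exactly zero. This is what allows \(\mu_f\) to be inserted so that both \(Z_g\) and the remainder are mean zero. Without it, the error from replacing \(n/N\) by \(r\) would be multiplied by an \(O_p(G)\) uncentered sum and would not vanish. Once the centering is in place, the rest is routine stochastic-order bookkeeping.
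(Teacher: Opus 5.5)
Your proposal is correct and follows essentially the same route as the paper: use the fact that the weights $\xi_{g,i}-\frac{n}{N}(1-\xi_{g,i})$ sum to zero to insert $\mu_f$, split $\frac{n}{N}=r+(\frac{n}{N}-r)$, and bound the remainder as $O_p(G^{-1/2})\cdot O_p(1)$ via the CLT for $n/N$ and the mean-zero cluster sum $R_G$. Your version is slightly more explicit than the paper's in checking that $R_G$ has mean zero and finite variance, and in spelling out the delta-method step.
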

\begin{proof}
    By definition,
    \begin{align*}
        \frac{1}{\sqrt{n}}\sum_{g=1}^G\sum_{i=1}^{m_g} \xi_{g,i}\nabla_{\theta} \ell_{\theta^\star}(X_{g,i},Y_{g,i}) &=\frac{1}{\sqrt{n}}\sum_{g=1}^G \varepsilon_g. 
    \end{align*}
    Furthermore, since \(\sum_{g=1}^G \sum_{i=1}^{m_g}\left(\xi_{g,i}-\frac{n}{N}(1-\xi_{g,i})\right)=0\), we have
    \begin{align*}
        &\frac{1}{\sqrt{n}}\sum_{g=1}^G\sum_{i=1}^{m_g}\left(\xi_{g,i}-\frac{n}{N}(1-\xi_{g,i})\right)\nabla_{\theta}\ell_{\theta^\star}(X_{g,i}, f(X_{g,i}))\\
         &= \frac{1}{\sqrt{n}}\sum_{g=1}^G\sum_{i=1}^{m_g}\left(\xi_{g,i}-\frac{n}{N}(1-\xi_{g,i})\right)(\nabla_{\theta}\ell_{\theta^\star}(X_{g,i}, f(X_{g,i}))-\mu_f),
    \end{align*}
    where \(\mu_f = \frac{1}{\kappa}\Expect\left[\sum_{i=1}^{m_g}\nabla_\theta \ell_{\theta^\star}(X_{g,i},f(X_{g,i}))\right]\). Next, replace \(n/N\) with the limiting value \(r\):
    \begin{align*}
        &\frac{1}{\sqrt{n}}\sum_{g=1}^G\sum_{i=1}^{m_g}\left(\xi_{g,i}-\frac{n}{N}(1-\xi_{g,i})\right)\nabla_{\theta}\ell_{\theta^\star}(X_{g,i}, f(X_{g,i}))\\
        &=\frac{1}{\sqrt{n}}\sum_{g=1}^GZ_g + \frac{1}{\sqrt{n}}\sum_{g=1}^G\sum_{i=1}^{m_g}\left(r-\frac{n}{N}\right)(1-\xi_{g,i})(\nabla_{\theta}\ell_{\theta^\star}(X_{g,i}, f(X_{g,i}))-\mu_f)\\
        &=\frac{1}{\sqrt{n}}\sum_{g=1}^GZ_g + \left(r-\frac{n}{N}\right)\frac{1}{\sqrt{n}}\sum_{g=1}^G\sum_{i=1}^{m_g}(1-\xi_{g,i})(\nabla_{\theta}\ell_{\theta^\star}(X_{g,i}, f(X_{g,i}))-\mu_f).
    \end{align*}
    By the central limit theorem, \(r - \frac{n}{N}=O_p(G^{-1/2})\). Furthermore, \((\xi_{g,i})_{i=1}^{m_g}\) are independent of \((X_{g,i},Y_{g,i})_{i=1}^{m_g}\) and \(\sum_{i=1}^{m_g}(\nabla_\theta \ell_{\theta^\star}(X_{g,i},f(X_{g,i}))-\mu_f)\) has mean zero. Since \( \frac{G}{n} = O_p(1)\), these results give
    \[
        \frac{1}{\sqrt{n}}\sum_{g=1}^G\sum_{i=1}^{m_g}(1-\xi_{g,i})(\nabla_{\theta}\ell_{\theta^\star}(X_{g,i}, f(X_{g,i}))-\mu_f) = O_p(1).
    \]
    And thus,
    \begin{align*}
        &\frac{1}{\sqrt{n}}\sum_{g=1}^G\sum_{i=1}^{m_g}\left(\xi_{g,i}-\frac{n}{N}(1-\xi_{g,i})\right)\nabla_{\theta}\ell_{\theta^\star}(X_{g,i}, f(X_{g,i}))\\
        &=\frac{1}{\sqrt{n}}\sum_{g=1}^GZ_g+O_p(G^{-1/2}),
    \end{align*}
    which implies that
    \[
    \sqrt{n}\nabla_\theta L_G^\PPI(\theta^\star) = \frac{1}{\sqrt{n}}\sum_{g=1}^G(\varepsilon_g - \lambda Z_g)+O_p(G^{-1/2}).
    \]
\end{proof}
We are now ready to prove a CLT for \(\nabla_\theta L^\PPI_G(\theta^\star)\) and hence for \(\hat\theta - \theta^\star\). Recall that \(\Sigma_\lambda = \cov(\varepsilon_g - \lambda Z_g)\).
\begin{lemma}
    As \(G \to \infty\)
    \begin{align}\label{eq:CLT1}
        \frac{1}{\sqrt{G}}\sum_{g=1}^{G}(\varepsilon_g - \lambda Z_g) \to \mathcal{N}(0, \Sigma_\lambda),
    \end{align}
    and hence
    \begin{align}\label{eq:CLT2}
        \sqrt{n}\nabla_\theta L^\PPI_G(\theta^\star) \to \mathcal{N}\left(0, \frac{1+r}{r\kappa}\Sigma_\lambda\right),
    \end{align}
    and
    \begin{align}\label{eq:CLT3}
        \sqrt{n}(\hat\theta - \theta^\star) \to \mathcal{N}\left(0, \frac{1+r}{r\kappa}H^{-1}_{\theta^\star}\Sigma_\lambda H^{-1}_{\theta^\star}\right).
    \end{align}
\end{lemma}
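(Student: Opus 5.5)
The plan is to prove the three displays in order, each building on the previous one. The only tools needed are the classical multivariate CLT, Slutsky's lemma, and Lemmas~\ref{lem:taylor} and~\ref{lem:gradient}.

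For \eqref{eq:CLT1}, the vectors \(W_g = \varepsilon_g - \lambda Z_g\), \(g=1,\ldots,G\), are i.i.d. This holds because each \(W_g\) is a fixed function of the \(g\)th cluster \((X_{g,i},Y_{g,i})_{i=1}^{m_g}\), its size \(m_g\), its probability \(p_g\) and its indicators \((\xi_{g,i})_{i=1}^{m_g}\), and these tuples are i.i.d. across \(g\) by Assumptions~\ref{as:iid} and~\ref{as:labels}.

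\emph{Mean zero.} Condition on the cluster data and \(p_g\), and use the independence of the labels from the data:
\begin{itemize}
\item \(\Expect[\varepsilon_g] = \Expect[p_g]\,\Expect\left[\sum_i \nabla_\theta\ell_{\theta^\star}(X_{g,i},Y_{g,i})\right] = \Expect[p_g]\,\kappa\,\nabla L(\theta^\star) = 0\), by the first order condition at the interior minimizer \(\theta^\star\). Interchanging gradient and expectation is justified by Assumption~\ref{as:reg}.
\item \(\Expect[\xi_{g,i} - r(1-\xi_{g,i}) \mid p_g] = p_g - r(1-p_g)\), which has expectation \(\Expect[p_g] - r(1-\Expect[p_g]) = 0\). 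One subtlety is that this conditional factor is not identically zero, so to get \(\Expect[Z_g]=0\) I would factor the expectation using independence of \(p_g\) from the data. The result is \(\Expect[Z_g] = \Expect[p_g - r(1-p_g)]\cdot \Expect\left[\sum_i(\nabla_\theta\ell_{\theta^\star}(X_{g,i},f(X_{g,i}))-\mu_f)\right] = 0\).
\end{itemize}

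\emph{Finite second moments.} \(\Vert\varepsilon_g\Vert\) and \(\Vert Z_g\Vert\) are dominated by \((1+r)\) times the sums \(\sum_i\Vert\nabla_\theta \ell\Vert\) (plus \(m_g\Vert\mu_f\Vert\)). These have finite variance by Assumption~\ref{as:reg} and \(\Expect[m_g^2]<\infty\).

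The Lindeberg–Lévy multivariate CLT then gives \eqref{eq:CLT1}.

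For \eqref{eq:CLT2}, I will write \(\frac{1}{\sqrt n}\sum_g W_g = \sqrt{G/n}\cdot\frac{1}{\sqrt G}\sum_g W_g\). By the weak law of large numbers applied to the i.i.d. cluster totals \(\sum_i\xi_{g,i}\), whose mean is \(\Expect[p_g]\kappa\), we get \(n/G \stackrel{p}{\to} \Expect[p_g]\kappa = \frac{r\kappa}{1+r}\). Hence \(G/n \stackrel{p}{\to} \frac{1+r}{r\kappa}\). Combining Lemma~\ref{lem:gradient} (whose remainder is \(O_p(G^{-1/2}) = o_p(1)\)) with Slutsky's lemma yields the limit \(\mathcal{N}\left(0,\frac{1+r}{r\kappa}\Sigma_\lambda\right)\).

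For \eqref{eq:CLT3}, I will use consistency (the Proposition) to obtain \(\prob(\hat\theta \in U)\to 1\), since \(U\) contains a neighborhood of \(\theta^\star\). On that event Lemma~\ref{lem:taylor} gives \(\sqrt n(\hat\theta-\theta^\star) = -\bar H^{-1}\sqrt n\,\nabla_\theta L^\PPI_G(\theta^\star)\). Since \(\bar H \stackrel{p}{\to} H_{\theta^\star}\) and \(H_{\theta^\star}\) is positive definite (Assumption~\ref{as:unique}), \(\bar H\) is invertible with probability tending to one, and \(\bar H^{-1}\stackrel{p}{\to} H_{\theta^\star}^{-1}\) by the continuous mapping theorem. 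Applying Slutsky together with \eqref{eq:CLT2} gives the stated sandwich limit; the sign is irrelevant by symmetry of the normal. The modification off the event \(\{\hat\theta\in U\}\) has vanishing probability and so does not affect convergence in distribution.

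The main obstacle is the bookkeeping in \eqref{eq:CLT1}, specifically verifying \(\Expect[Z_g]=0\) and finite second moments when \(p_g\) is random, since the within-cluster weights are not mean zero conditionally on \(p_g\). The remaining steps are routine Slutsky arguments.
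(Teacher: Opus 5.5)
Your proposal is correct and follows the same route as the paper. You apply the multivariate CLT to the i.i.d.\ cluster terms \(\varepsilon_g-\lambda Z_g\), then use Slutsky with the gradient lemma and \(G/n \stackrel{p}{\to} \frac{1+r}{r\kappa}\), then invert \(\bar H\) from the Taylor lemma; you also fill in details the paper leaves implicit (mean zero of \(Z_g\) under random \(p_g\), the second-moment bounds, and restricting to the event \(\hat\theta\in U\), whose probability tends to one).
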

\begin{proof}
    Equation \eqref{eq:CLT1}, follows from the Assumption~\ref{as:reg} and the standard multivariate central limit theorem. The second central limit theorem in \eqref{eq:CLT2} follows from Lemma~\ref{lem:gradient} and the fact that \(\frac{G}{n} \to \frac{1+r}{r\kappa}\).
    
    Finally, Lemma~\ref{lem:taylor} gives that
    \begin{align*}
        \nabla_\theta L_G^\PPI(\theta^\star)&=-(1+o_p(1))H_{\theta^\star}(\hat\theta - \theta^\star),
    \end{align*}
    Thus, \eqref{eq:CLT3} follows from \eqref{eq:CLT2} and Assumption~\ref{as:unique} which states that \(H_{\theta^\star}\) is invertible. The central limit theorem in Theorem~\ref{thrm:general-clt} is proved.
\end{proof}
It remains to show that the variance estimator \(\hat{\Sigma}\) is consistent for \(\Sigma\), but this follows from the moment assumptions in Assumption~\ref{as:reg} and the weak law of large numbers.

\subsection{Proof of Theorem~\ref{thrm:power-tuned-clt}}

Theorem~\ref{thrm:power-tuned-clt} follows from the lemmas in the previous section and the following estimate:
\begin{lemma}\label{lem:grads-close}
    Suppose that \(\hat{\lambda},\lambda \in \Lambda\) and \(\hat{\lambda} \stackrel{p}{\to} \lambda\). Let
    \[
        L^\PPI_{G,\lambda}(\theta) = \frac{1}{n}\sum_{g=1}^G \sum_{i=1}^{m_g} \xi_{g,i}\ell_\theta(X_{g,i}, Y_{g,i}) - \lambda\left(\xi_{g,i} - \frac{n}{N}(1-\xi_{g,i})\right)\ell_\theta(X_{g,i},f(X_{g,i})).
    \] 
    Then
    \begin{align*}
         \Vert \nabla_\theta L^\PPI_{G,\hat\lambda}(\theta^\star) - \nabla_\theta L^\PPI_{G,\lambda}(\theta^\star)\Vert = o_p(G^{-1/2}).
    \end{align*}
\end{lemma}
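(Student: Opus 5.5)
The key observation is that the loss is linear in \(\lambda\). We have
\[
\nabla_\theta L^\PPI_{G,\hat\lambda}(\theta^\star) - \nabla_\theta L^\PPI_{G,\lambda}(\theta^\star) = -(\hat\lambda-\lambda)\,D_G,
\]
where
\[
D_G = \frac{1}{n}\sum_{g=1}^G\sum_{i=1}^{m_g}\left(\xi_{g,i}-\tfrac{n}{N}(1-\xi_{g,i})\right)\nabla_\theta\ell_{\theta^\star}(X_{g,i},f(X_{g,i})).
\]
This term does not depend on \(\lambda\) at all. It therefore suffices to show \(\Vert D_G\Vert = O_p(G^{-1/2})\), since then \(|\hat\lambda-\lambda| = o_p(1)\) gives a product that is \(o_p(1)\cdot O_p(G^{-1/2}) = o_p(G^{-1/2})\).

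To bound \(D_G\), we reuse the computation in the proof of Lemma~\ref{lem:gradient}. Since \(\sum_{g}\sum_i(\xi_{g,i}-\frac{n}{N}(1-\xi_{g,i}))=0\), we can subtract \(\mu_f\) inside the sum at no cost. Replacing \(n/N\) by \(r\) then gives
\[
\sqrt{n}\,D_G = \frac{1}{\sqrt{n}}\sum_{g=1}^G Z_g + O_p(G^{-1/2}).
\]
The \(Z_g\) are i.i.d. across clusters with mean zero. This holds because \(\Expect[\xi_{g,i}-r(1-\xi_{g,i})\mid p_g] = p_g - r(1-p_g)\), whose expectation is zero after averaging over \(p_g\), and because the labeling is independent of the cluster data, which has centered sum \(\sum_i(\nabla_\theta\ell_{\theta^\star}(X_{g,i},f(X_{g,i}))-\mu_f)\). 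Their variance is finite by Assumption~\ref{as:reg}. The central limit theorem (or just Chebyshev) then gives \(G^{-1/2}\sum_g Z_g = O_p(1)\). Since \(G/n \stackrel{p}{\to} (1+r)/(r\kappa)\), it follows that \(\sqrt{n}D_G = O_p(1)\). Because \(n/G\) is bounded in probability away from zero, this means \(D_G = O_p(n^{-1/2}) = O_p(G^{-1/2})\).

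The only point needing care, and the main potential obstacle, is the mean-zero claim for \(Z_g\) together with the finite second moment. The mean-zero claim requires the independence of \(p_g\) and \(\xi_{g,i}\) from the cluster data and from \(m_g\). For the second moment, we bound \(\Vert Z_g\Vert \le \max(1,r)\sum_i\Vert\nabla_\theta\ell_{\theta^\star}(X_{g,i},f(X_{g,i}))-\mu_f\Vert\). The right-hand side has finite variance by Assumption~\ref{as:reg} and \(\Expect[m_g^2]<\infty\).

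Given the lemma, Theorem~\ref{thrm:power-tuned-clt} follows by combining it with Lemma~\ref{lem:gradient} and the Taylor step of Lemma~\ref{lem:taylor}. That Taylor step applies to \(\hat\lambda\) because \(\hat\lambda\in\Lambda\) preserves convexity and hence consistency, and because the Hessian of \(L^\PPI_{G,\hat\lambda}\) differs from that of \(L^\PPI_{G,\lambda}\) by \((\hat\lambda-\lambda)\) times a uniformly bounded term.
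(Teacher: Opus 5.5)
Your proof is correct and follows the paper's argument. You use the linearity of the loss in $\lambda$ to factor out $(\hat\lambda-\lambda)$, show that the remaining $\lambda$-free average is $O_p(G^{-1/2})$ using $n/G \stackrel{p}{\to} r\kappa/(1+r)$, and conclude by $o_p(1)\cdot O_p(G^{-1/2})=o_p(G^{-1/2})$. The paper only asserts that the unnormalized sum is $O_p(\sqrt{G})$; you justify this through the centering and $n/N \to r$ reduction from Lemma~\ref{lem:gradient}, plus the mean-zero and finite-variance properties of $Z_g$, which fills in a step rather than changing the route.
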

\begin{proof}
    Note that 
    \begin{align*}
    \nabla_\theta L^\PPI_{G,\hat\lambda}(\theta^\star) - \nabla_\theta L^\PPI_{G,\lambda}(\theta^\star)
    &=-(\hat{\lambda} - \lambda)\frac{1}{n}\sum_{g=1}^G\sum_{i=1}^{m_g} \left(\xi_{g,i} - \frac{n}{N}(1-\xi_{g,i})\right)\nabla_\theta \ell_\theta(X_{g,i},f(X_{g,i}))\\
    &=-\frac{1}{n}(\hat{\lambda}-\lambda)O_p(\sqrt{G})\\
    &=o_p(G^{-1/2}),
    \end{align*}
    where the last line follows from the fact that \(\hat{\lambda} \stackrel{p}{\to} \lambda\) and \(\frac{n}{G} \stackrel{p}{\to} \frac{\kappa r}{1+r}\).
\end{proof}
The central limit theorem then follows by the same argument outlined in the previous section.

\section{Additional formulas for the PPI variance}\label{sec:appn-var}

Define
\begin{align*}
    \mu_p &= \Expect[p_g] = \frac{r}{1+r}, \text{ and }\\
    \eta_p &=\frac{\var(p_g)}{\mu_p(1-\mu_p)}.
\end{align*}
The quantity \(\eta_p\) is exactly the correlation between two label indicators \(\xi_{g,i}\) and \(\xi_{g,j}\) for different units within the same cluster. The value \(\mu_p\) and \(\eta_p\) are the main design parameters for Cluster-Robust PPI++. The mean \(\mu_p\) determines the expected number of labeled units and the intraclass correlation \(\eta_p\) determines whether the labels are evenly distributed across clusters (\(\eta_p \approx 0\)) or concentrated in a fraction of clusters (\(\eta_p \approx 1\)). The number of sampled cluster \(G\) is also a design parameter.

The variance of the PPI++ estimator also depends on a number of \emph{non-design moments} that capture the variability and correlation in the outcomes and predictions. In the below notation, \(s\) refers to moments that involve the sums over the \emph{same} unit whereas \(d\) refers to moments that sum over \emph{different} units within the same cluster.
\begin{align*}
    V_{Y,s}&=\Expect\left[\sum_{i=1}^{m_g}(Y_{g,i}-\theta^\star)^2\right],\\
    V_{Y,d}&=\Expect\left[\sum_{i\neq j} (Y_{g,i}-\theta^\star)(Y_{g,j}-\theta^\star)\right],\\
    V_{f,s}&=\Expect\left[\sum_{i=1}^{m_g}(f(X_{g,i})-\theta^f)^2\right],\\
    V_{f,d}&=\Expect\left[\sum_{i\neq j} (f(X_{g,i})-\theta^f)(f(X_{g,j})-\theta^f)\right],\\
    V_{Y,f,s}&=\Expect\left[\sum_{i=1}^{m_g}(Y_{g,i}-\theta^\star)(f(X_{g,i})-\theta^f)\right], \text{ and}\\
    V_{Y,f,d}&=\Expect\left[\sum_{i\neq j} (Y_{g,i}-\theta^\star)(f(X_{g,j})-\theta^f)\right].
\end{align*}

The asymptotic variance of the power-tuned PPI++ estimator for the mean is
\begin{align}\label{eq:design-var}
    V^\PPI &= \frac{1}{\kappa^2 G\mu_p}\left(V_{\varepsilon}- \frac{V_{\varepsilon,Z}^2}{V_Z}\right),
\end{align}
where
\begin{align*}
    V_{\varepsilon} &=V_{Y,s}+\left(\eta_p(1-\mu_p) + \mu_p\right)V_{Y,d},\\
    V_{Z} &=\frac{1}{1-\mu_p}\left(V_{f,s}+\eta_pV_{f,d}\right), \text{ and}\\
    V_{\varepsilon,Z} &=V_{Y,f,s}+\eta_pV_{Y,f,d}.
\end{align*}

And the correlation between \(\varepsilon_g\) and \(Z_g\) can be written as
\begin{align}\label{eq:design-corr}
    \rho = \frac{V_{\varepsilon,Z}}{\sqrt{V_\varepsilon V_Z}}.
\end{align}
Equations~\ref{eq:design-var} and \ref{eq:design-corr} show how the variance and precision gain of the Cluster-Robust PPI++ estimator depends on both the design parameters \(\mu_p\) and \(\eta_p\) and on the non-design moments \(V\). If the non-design moments are known, then equation \eqref{eq:design-var} can be used to compare different labeling strategies and \eqref{eq:design-corr} can be used to assess the benefit of using PPI.

Since the non-design moments are typically not known, the following equations provide in-sample estimators. These estimators could be computed on data from a pilot study that satisfies the assumptions of Cluster-Robust PPI++ as outlined in Sections~\ref{sec:DGP} and \ref{sec:appn-assumptions}. The estimators are 
\begin{align*}
    \hat{V}_{Y,s}&=\frac{1}{G}\sum_{g=1}^G \frac{1}{\hat{r}_1}\sum_{i=1}^{m_g}\xi_{g,i}(Y_{g,i}-\hat{\theta}^\PPI_{\hat{\lambda}})^2,\\
    \hat{V}_{Y,d}&=\frac{1}{G}\sum_{g=1}^G\frac{1}{\hat{r}_2}\sum_{i\neq j}\xi_{g,i}\xi_{g,j}(Y_{g,i}-\hat{\theta}^\PPI_{\hat{\lambda}})(Y_{g,j}-\hat{\theta}^\PPI_{\hat{\lambda}}),\\
    \hat{V}_{f,s}&=\frac{1}{G}\sum_{g=1}^G\sum_{i=1}^{m_g}(f(X_{g,i})-\hat{\theta}^f)^2, \\
    \hat{V}_{f,d}&=\frac{1}{G}\sum_{g=1}^G\sum_{i\neq j}(f(X_{g,i})-\hat\theta^f)(f(X_{g,j})-\hat\theta^f),\\
    \hat{V}_{Y,f,s}&=\frac{1}{G}\sum_{g=1}^G\frac{1}{\hat{r}_1}\sum_{i=1}^{m_g}\xi_{g,i}(Y_{g,i}-\hat\theta^\PPI_{\hat\lambda})(f(X_{g,i})-\hat\theta^f), \text{ and}\\
    \hat{V}_{Y,f,d}&=\frac{1}{G}\sum_{g=1}^G\frac{1}{\hat{r}_1}\sum_{i\neq j}\xi_{g,i}(Y_{g,i}-\hat\theta^\PPI_{\hat\lambda})(f(X_{g,j})-\hat\theta^f),
\end{align*}
where 
\begin{align*}
    \hat{\theta}^f &= \frac{1}{n+N}\sum_{g=1}^G \sum_{i=1}^{m_g}f(X_{g,i}),\\
    \hat{r}_1 &=\frac{n}{n+N}, \text{ and}\\
    \hat{r}_2 &=\frac{\sum_{g=1}^G \sum_{i \neq j}\xi_{g,i}\xi_{g,j}}{\sum_{g=1}^G m_g(m_g-1)}.
\end{align*}
Also define \(\hat{\kappa} = \frac{n+N}{G}\).

Suppose that \(G',\mu'_p,\eta'_p\) are the design parameters for a new study that uses data from the same population as the pilot study. Then, the in-sample version of equations \eqref{eq:design-var} and \eqref{eq:design-corr} are
\begin{align}\label{eq:design-estimates}
    \hat{V}^\PPI &= \frac{1}{\hat{\kappa}^2 G'\mu'_p}\left(\hat{V}_\varepsilon-\frac{\hat{V}_{\varepsilon,Z}^2}{\hat{V}_Z}\right)\\
    \hat{\rho} &= \frac{\hat{V}_{\varepsilon,Z}}{\sqrt{\hat{V}_\varepsilon\hat{V}_Z}},
\end{align}
where
\begin{align*}
    \hat{V}_\varepsilon &= \hat{V}_{Y,s}+\left(\eta'_p(1-\mu'_p)+\mu'_p\right)\hat{V}_{Y,d},\\
    \hat{V}_Z &=\frac{1}{1-\mu'_p}\left(\hat{V}_{f,s}+
    \eta'_p\hat{V}_{f,d}\right), \text{ and}\\
    \hat{V}_{\varepsilon,Z}&=\hat{V}_{Y,f,s} + \eta'_p \hat{V}_{Y,f,d}.
\end{align*}
The accuracy of these forecasts depends on both the size of the pilot and how its labels are distributed. Estimating \(V_{Y,d}\), which captures how outcomes vary together within clusters, requires at least two labeled units in some clusters. Small pilots can yield unstable moment estimates, producing negative variance forecasts or estimated correlations outside \([-1,1]\). Such values signal that the corresponding forecasts are unreliable.

\section{Additional details on the empirical application}\label{app:application}

\subsection{Data}\label{app:data}

We combine two data sets from the replication data of \citet{girbau_face_2024}. The first contains the human annotations. In each of the $67{,}062$ sampled frames, the annotators drew a box around the face of each of the 41 politicians on screen and labeled the box with the politician's name. The second contains the output of the face recognition system, which the authors ran on all broadcasts since 2013, from start to finish, to measure the politicians' screen time. The system analyzes one frame per second, or about every 30th frame, and the output records which of the 41 politicians it identified in each analyzed frame. Frames in which it identified none of them do not appear in the output. The sitting prime minister, whose appearance defines $Y_{g,i}$, is Shinz\={o} Abe in broadcasts until 15 September 2020, Yoshihide Suga from 16 September 2020, and Fumio Kishida from 4 October 2021.

To pair each annotated frame with a prediction, we search the output for the frame of the same broadcast that is closest to it, provided it is at most 14 frames, or about half a second, away. The frames that the system analyzed are one second apart, so at most one of them lies this close to the annotated frame, namely the one from the same second. We set $f(X_{g,i})=1$ if the system identified the sitting prime minister in that frame, and $f(X_{g,i})=0$ if it identified only other politicians or if the output lists no frame this close. For example, in the NHK News 7 broadcast of 12 March 2018, the annotators marked Shinz\={o} Abe in frame 32926, so $Y_{g,i}=1$. The closest frame in the output, 32938, lies 12 frames later, and the system identified Abe in it, so $f(X_{g,i})=1$. The annotators also marked Abe one second earlier, in frame 32896. The output lists no frame within 14 frames of it, so the system identified no politician in that second and $f(X_{g,i})=0$. In most pairs, the frame in the output is the annotated frame itself, and only 3\% of the annotated frames are paired with a frame a few frames away, as in this example.

\subsection{Simulation details}\label{app:sim-details}

The mean labeling probability $\mu_p$ in Section~\ref{sec:sim-design} equals the expected share of labeled frames. The realized share varies across replications, more so in the fully labeled design, where it depends on how many broadcasts are labeled and how long they are. At $\mu_p=0.05$, for example, the realized share lies between 3.8\% and 6.3\% in 95\% of replications in the fully labeled design, but between 4.8\% and 5.2\% in the partially labeled design.

The two designs also produce different types of clusters (Figure~\ref{fig:cluster-data}). The fully labeled design yields only fully labeled and fully unlabeled broadcasts. In the partially labeled design, short broadcasts can end up with no labeled frame when $\mu_p$ is small and with every frame labeled when $\mu_p$ is large. About 38\% of broadcasts are fully unlabeled at $\mu_p=0.05$, and about 38\% are fully labeled at $\mu_p=0.95$.

\subsection{Full results}\label{app:telestats}

Table~\ref{tab:telestats-all} reports coverage and RMSE for every value of $\mu_p$ in both designs. The labels-only estimator averages observed outcomes over labeled frames, while the predictions-only estimator averages predictions over all sampled frames. Both use standard errors clustered by broadcast.

\begin{table}[h]
\centering
\small
\caption{Coverage of nominal 95\% intervals for $\theta^\star=0.109$ and RMSE in percentage points, by labeling design and mean labeling probability $\mu_p$. The predictions alone (Predictions only) use no labels, so their values differ across rows only by simulation noise. 2{,}500 replications per row.}
\label{tab:telestats-all}
\vspace{4pt}
\begin{tabular}{lcccccccc}
\toprule
 & \multicolumn{4}{c}{Coverage} & \multicolumn{4}{c}{RMSE} \\
\cmidrule(lr){2-5}\cmidrule(lr){6-9}
 & Labels & Predictions & PPI++ & Cluster-Robust & Labels & Predictions & PPI++ & Cluster-Robust \\
$\mu_p$ & only & only & (i.i.d.) & PPI++ & only & only & (i.i.d.) & PPI++ \\
\midrule
\multicolumn{9}{l}{\emph{Fully labeled design}} \\
0.05 & 0.90 & 0.18 & 0.27 & 0.88 & 3.18 & 1.98 & 1.61 & 1.36 \\
0.15 & 0.94 & 0.21 & 0.28 & 0.93 & 1.85 & 1.93 & 0.99 & 0.94 \\
0.25 & 0.94 & 0.19 & 0.26 & 0.94 & 1.41 & 1.95 & 0.85 & 0.83 \\
0.35 & 0.93 & 0.19 & 0.27 & 0.94 & 1.21 & 1.95 & 0.81 & 0.81 \\
0.45 & 0.95 & 0.20 & 0.28 & 0.95 & 1.03 & 1.95 & 0.75 & 0.75 \\
0.55 & 0.94 & 0.20 & 0.26 & 0.94 & 0.96 & 1.95 & 0.75 & 0.76 \\
0.65 & 0.95 & 0.20 & 0.25 & 0.94 & 0.88 & 1.96 & 0.73 & 0.73 \\
0.75 & 0.95 & 0.20 & 0.27 & 0.94 & 0.81 & 1.93 & 0.72 & 0.73 \\
0.85 & 0.95 & 0.20 & 0.27 & 0.94 & 0.76 & 1.95 & 0.72 & 0.73 \\
0.95 & 0.95 & 0.19 & 0.26 & 0.95 & 0.71 & 1.93 & 0.69 & 0.70 \\
\addlinespace
\multicolumn{9}{l}{\emph{Partially labeled design}} \\
0.05 & 0.95 & 0.20 & 0.54 & 0.95 & 0.88 & 1.94 & 0.73 & 0.73 \\
0.15 & 0.95 & 0.19 & 0.38 & 0.94 & 0.76 & 1.94 & 0.71 & 0.72 \\
0.25 & 0.95 & 0.20 & 0.33 & 0.94 & 0.73 & 1.94 & 0.70 & 0.71 \\
0.35 & 0.95 & 0.20 & 0.30 & 0.94 & 0.73 & 1.95 & 0.72 & 0.72 \\
0.45 & 0.95 & 0.20 & 0.29 & 0.95 & 0.71 & 1.93 & 0.70 & 0.70 \\
0.55 & 0.95 & 0.21 & 0.28 & 0.94 & 0.72 & 1.93 & 0.72 & 0.72 \\
0.65 & 0.95 & 0.19 & 0.28 & 0.95 & 0.70 & 1.95 & 0.70 & 0.70 \\
0.75 & 0.95 & 0.20 & 0.28 & 0.95 & 0.70 & 1.92 & 0.69 & 0.69 \\
0.85 & 0.94 & 0.20 & 0.26 & 0.94 & 0.72 & 1.93 & 0.72 & 0.73 \\
0.95 & 0.95 & 0.20 & 0.28 & 0.95 & 0.69 & 1.93 & 0.69 & 0.69 \\
\bottomrule
\end{tabular}
\end{table}

\clearpage
\section{Application to multiple regression}\label{app:regression}

\subsection{Simulation design}\label{app:reg-design}

Section~\ref{sec:experiments} estimates a mean, but many studies estimate the coefficients of a regression. Here we demonstrate Cluster-Robust PPI++ for a multiple regression, using the data of Section~\ref{sec:sim-design}. The broadcasts come from two news programs, NHK News 7 on the public broadcaster NHK and Hodo Station on the commercial network TV Asahi. We regress the indicator that the sitting prime minister is on screen on two characteristics of the broadcast. The first, $\mathrm{Era}_g$, equals one if broadcast $g$ aired in the Suga and Kishida era, from 16 September 2020, and zero if it aired under Abe. The second, $\mathrm{NHK}_g$, equals one if the broadcast comes from NHK News 7 and zero if it comes from Hodo Station. The regression is
\[
Y_{g,i}=\beta_0+\beta_1\,\mathrm{Era}_g+\beta_2\,\mathrm{NHK}_g+\varepsilon_{g,i}.
\]
The targets are the coefficients of this regression fitted to all human labels in the study population (Table~\ref{tab:regression-truth}). We use the regression to evaluate the method, not to test a substantive hypothesis. The era coefficient describes a difference between two periods. It does not isolate the effect of a new prime minister, because the Suga and Kishida era coincides with the COVID-19 pandemic.

With all frames labeled, the prime minister is on screen 13.4 percentage points more often in the Suga and Kishida era than under Abe. Fitted to the predictions instead, the regression puts this gap at 11.1 percentage points, 17\% lower, because the predictions understate the prime minister's screen time by the same share, about a sixth, in both eras. The NHK coefficient is nearly the same with the predictions (2.18 rather than 2.23 percentage points), because the predictions understate screen time by about the same number of percentage points in both programs. Controlling for the program leaves the era coefficient essentially unchanged.

\begin{table}[h]
\centering
\small
\caption{Least-squares regression of whether the sitting prime minister is on screen, fitted to all $67{,}062$ frames of the study population. Coefficients in percentage points, with standard errors clustered by broadcast in parentheses. The coefficients from the human labels are the targets of the simulation.}
\label{tab:regression-truth}
\vspace{4pt}
\begin{tabular}{lcc}
\toprule
 & Human labels, $Y_{g,i}$ & Predictions, $f(X_{g,i})$ \\
\midrule
Suga and Kishida era & 13.39 & 11.14 \\
 & (2.74) & (2.58) \\
NHK News 7 & 2.23 & 2.18 \\
 & (1.35) & (1.24) \\
Intercept & 7.85 & 6.37 \\
 & (0.89) & (0.79) \\
\bottomrule
\end{tabular}
\end{table}

We evaluate the estimators as in Section~\ref{sec:sim-design}, with the same designs, values of $\mu_p$, and random draws. The labels-only estimator is least squares on the labeled frames with cluster-robust standard errors. PPI++ and Cluster-Robust PPI++ use the general estimator of Section~\ref{sec:general} with the squared-error loss $\ell_\theta(x,y)=\frac{1}{2}(y-x^\top\theta)^2$, and each chooses a single $\hat\lambda$ to minimize the sum of its estimated variances of the three coefficients. We also fit the regression to the predictions of all sampled frames, ignoring the labels, with cluster-robust standard errors.

\subsection{Results on coverage and RMSE}\label{app:reg-results}

Figure~\ref{fig:regression-era} and Table~\ref{tab:regression-era} show coverage and RMSE for the era coefficient, and Figure~\ref{fig:regression-nhk} and Table~\ref{tab:regression-nhk} show them for the NHK coefficient. The results mirror those for the mean. PPI++ with i.i.d.\ standard errors covers the era coefficient in only 21\% to 51\% of replications and the NHK coefficient in only 26\% to 54\%. Cluster-Robust PPI++ covers the NHK coefficient in 94\% to 96\% of replications in every setting, and the era coefficient in 91\% to 95\% in every setting but one. In the fully labeled design with $\mu_p=0.05$, only about 15 Suga and Kishida broadcasts are labeled, and both Cluster-Robust PPI++ (81\%) and the labels-only estimator (81\%) undercover the era coefficient. For both coefficients, Cluster-Robust PPI++ reduces the RMSE of the labels-only estimator in the fully labeled design most when labels are scarce, by about half for $\mu_p$ up to $0.15$, and the reduction shrinks steadily to about 10\% at $\mu_p=0.75$.

How far the predictions alone can be trusted depends on the coefficient. Fitted to the predictions alone, the regression understates the era coefficient by about 2.3 percentage points, and its intervals cover the target in only about 82\% of replications. Because the era coefficient is estimated imprecisely, its RMSE is still lower than that of Cluster-Robust PPI++ when 15\% or fewer of the broadcasts are labeled in the fully labeled design. For the NHK coefficient, the predictions alone are nearly unbiased, so their intervals cover the target in about 95\% of replications and their RMSE is lower than that of every estimator that uses labels. Without labels, a researcher could not tell which of these cases applies.

\begin{figure}[h]
\centering
\includegraphics[width=\textwidth]{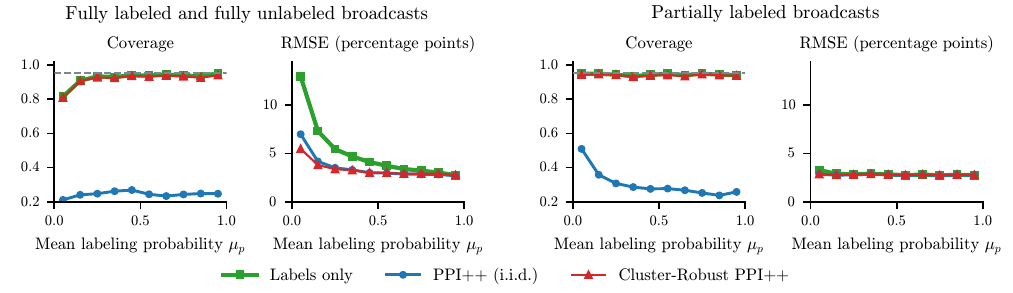}
\caption{Coverage, the share of 95\% confidence intervals that contain the era coefficient $\beta_1=13.39$ percentage points, and RMSE in percentage points, by the mean labeling probability $\mu_p$. In the fully labeled design (left), each broadcast is fully labeled or fully unlabeled. In the partially labeled design (right), most broadcasts are partially labeled. 2{,}500 replications per value of $\mu_p$.}
\label{fig:regression-era}
\end{figure}

\begin{table}[h]
\centering
\small
\caption{Coverage of nominal 95\% intervals for the era coefficient $\beta_1=13.39$ percentage points and RMSE in percentage points, by labeling design and mean labeling probability $\mu_p$. The predictions alone (Predictions only) use no labels, so their values differ across rows only by simulation noise. 2{,}500 replications per row.}
\label{tab:regression-era}
\vspace{4pt}
\begin{tabular}{lcccccccc}
\toprule
 & \multicolumn{4}{c}{Coverage} & \multicolumn{4}{c}{RMSE} \\
\cmidrule(lr){2-5}\cmidrule(lr){6-9}
 & Labels & Predictions & PPI++ & Cluster-Robust & Labels & Predictions & PPI++ & Cluster-Robust \\
$\mu_p$ & only & only & (i.i.d.) & PPI++ & only & only & (i.i.d.) & PPI++ \\
\midrule
\multicolumn{9}{l}{\emph{Fully labeled design}} \\
0.05 & 0.81 & 0.82 & 0.21 & 0.81 & 12.94 & 3.52 & 6.98 & 5.48 \\
0.15 & 0.91 & 0.83 & 0.24 & 0.91 & 7.32 & 3.36 & 4.17 & 3.80 \\
0.25 & 0.93 & 0.82 & 0.25 & 0.93 & 5.44 & 3.45 & 3.51 & 3.37 \\
0.35 & 0.93 & 0.82 & 0.26 & 0.92 & 4.69 & 3.50 & 3.29 & 3.25 \\
0.45 & 0.94 & 0.82 & 0.27 & 0.94 & 4.11 & 3.41 & 3.00 & 3.01 \\
0.55 & 0.94 & 0.82 & 0.24 & 0.93 & 3.72 & 3.46 & 2.96 & 3.00 \\
0.65 & 0.94 & 0.83 & 0.23 & 0.94 & 3.39 & 3.40 & 2.87 & 2.91 \\
0.75 & 0.94 & 0.83 & 0.24 & 0.93 & 3.22 & 3.42 & 2.85 & 2.90 \\
0.85 & 0.94 & 0.82 & 0.25 & 0.92 & 3.02 & 3.47 & 2.82 & 2.90 \\
0.95 & 0.95 & 0.82 & 0.25 & 0.94 & 2.75 & 3.43 & 2.70 & 2.77 \\
\addlinespace
\multicolumn{9}{l}{\emph{Partially labeled design}} \\
0.05 & 0.95 & 0.83 & 0.51 & 0.94 & 3.29 & 3.39 & 2.84 & 2.86 \\
0.15 & 0.95 & 0.83 & 0.36 & 0.95 & 2.89 & 3.40 & 2.76 & 2.78 \\
0.25 & 0.94 & 0.82 & 0.31 & 0.94 & 2.84 & 3.45 & 2.78 & 2.79 \\
0.35 & 0.93 & 0.81 & 0.29 & 0.93 & 2.91 & 3.49 & 2.86 & 2.87 \\
0.45 & 0.94 & 0.83 & 0.28 & 0.94 & 2.83 & 3.41 & 2.79 & 2.80 \\
0.55 & 0.95 & 0.83 & 0.28 & 0.94 & 2.75 & 3.39 & 2.73 & 2.75 \\
0.65 & 0.94 & 0.82 & 0.27 & 0.93 & 2.82 & 3.49 & 2.81 & 2.82 \\
0.75 & 0.95 & 0.83 & 0.25 & 0.95 & 2.74 & 3.38 & 2.73 & 2.73 \\
0.85 & 0.94 & 0.82 & 0.24 & 0.94 & 2.79 & 3.48 & 2.78 & 2.80 \\
0.95 & 0.94 & 0.82 & 0.26 & 0.94 & 2.75 & 3.43 & 2.75 & 2.76 \\
\bottomrule
\end{tabular}
\end{table}

\begin{figure}[h]
\centering
\includegraphics[width=\textwidth]{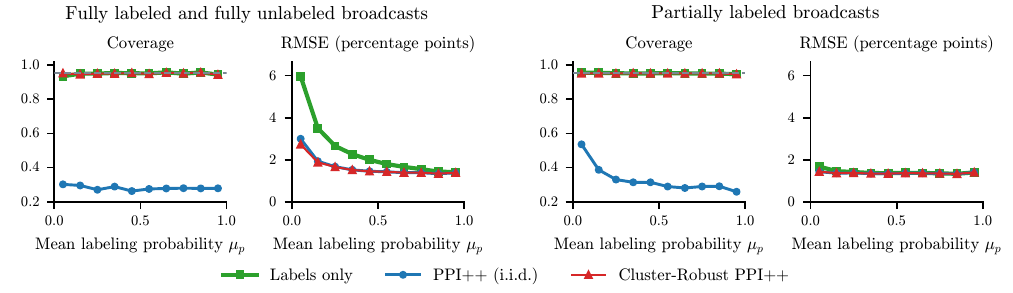}
\caption{Coverage, the share of 95\% confidence intervals that contain the NHK coefficient $\beta_2=2.23$ percentage points, and RMSE in percentage points, by the mean labeling probability $\mu_p$. In the fully labeled design (left), each broadcast is fully labeled or fully unlabeled. In the partially labeled design (right), most broadcasts are partially labeled. 2{,}500 replications per value of $\mu_p$.}
\label{fig:regression-nhk}
\end{figure}

\begin{table}[h]
\centering
\small
\caption{Coverage of nominal 95\% intervals for the NHK coefficient $\beta_2=2.23$ percentage points and RMSE in percentage points, by labeling design and mean labeling probability $\mu_p$. The predictions alone (Predictions only) use no labels, so their values differ across rows only by simulation noise. 2{,}500 replications per row.}
\label{tab:regression-nhk}
\vspace{4pt}
\begin{tabular}{lcccccccc}
\toprule
 & \multicolumn{4}{c}{Coverage} & \multicolumn{4}{c}{RMSE} \\
\cmidrule(lr){2-5}\cmidrule(lr){6-9}
 & Labels & Predictions & PPI++ & Cluster-Robust & Labels & Predictions & PPI++ & Cluster-Robust \\
$\mu_p$ & only & only & (i.i.d.) & PPI++ & only & only & (i.i.d.) & PPI++ \\
\midrule
\multicolumn{9}{l}{\emph{Fully labeled design}} \\
0.05 & 0.93 & 0.95 & 0.30 & 0.95 & 5.96 & 1.25 & 3.00 & 2.73 \\
0.15 & 0.95 & 0.95 & 0.30 & 0.94 & 3.50 & 1.27 & 1.93 & 1.88 \\
0.25 & 0.95 & 0.94 & 0.27 & 0.95 & 2.66 & 1.26 & 1.67 & 1.66 \\
0.35 & 0.95 & 0.95 & 0.29 & 0.95 & 2.27 & 1.24 & 1.52 & 1.51 \\
0.45 & 0.95 & 0.95 & 0.26 & 0.95 & 2.00 & 1.24 & 1.45 & 1.45 \\
0.55 & 0.95 & 0.95 & 0.28 & 0.95 & 1.80 & 1.25 & 1.43 & 1.43 \\
0.65 & 0.96 & 0.95 & 0.28 & 0.95 & 1.66 & 1.22 & 1.39 & 1.39 \\
0.75 & 0.95 & 0.95 & 0.28 & 0.95 & 1.55 & 1.25 & 1.39 & 1.39 \\
0.85 & 0.96 & 0.95 & 0.28 & 0.96 & 1.44 & 1.21 & 1.34 & 1.34 \\
0.95 & 0.94 & 0.95 & 0.28 & 0.94 & 1.41 & 1.25 & 1.38 & 1.38 \\
\addlinespace
\multicolumn{9}{l}{\emph{Partially labeled design}} \\
0.05 & 0.95 & 0.95 & 0.54 & 0.95 & 1.67 & 1.23 & 1.43 & 1.43 \\
0.15 & 0.95 & 0.95 & 0.39 & 0.95 & 1.46 & 1.24 & 1.36 & 1.37 \\
0.25 & 0.95 & 0.95 & 0.33 & 0.95 & 1.42 & 1.24 & 1.37 & 1.37 \\
0.35 & 0.95 & 0.95 & 0.31 & 0.95 & 1.37 & 1.22 & 1.34 & 1.34 \\
0.45 & 0.95 & 0.95 & 0.31 & 0.95 & 1.36 & 1.22 & 1.34 & 1.34 \\
0.55 & 0.95 & 0.94 & 0.29 & 0.95 & 1.38 & 1.25 & 1.36 & 1.37 \\
0.65 & 0.95 & 0.95 & 0.28 & 0.95 & 1.38 & 1.25 & 1.36 & 1.36 \\
0.75 & 0.95 & 0.95 & 0.29 & 0.95 & 1.35 & 1.22 & 1.35 & 1.35 \\
0.85 & 0.95 & 0.95 & 0.29 & 0.95 & 1.33 & 1.22 & 1.33 & 1.33 \\
0.95 & 0.95 & 0.94 & 0.26 & 0.95 & 1.40 & 1.28 & 1.40 & 1.40 \\
\bottomrule
\end{tabular}
\end{table}

\end{document}